\algnewcommand\And{\textbf{and} }
\useunder{\uline}{\ul}{}
\theoremstyle{plain}
\newtheorem{theorem}{Theorem}
\newtheorem{lemma}[theorem]{Lemma}
\theoremstyle{definition}
\newtheorem{definition}{Definition}
\definecolor{codegreen}{rgb}{0,0.6,0}
\definecolor{codegray}{rgb}{0.5,0.5,0.5}
\definecolor{codepurple}{rgb}{0.58,0,0.82}
\definecolor{backcolour}{rgb}{0.95,0.95,0.92}
\lstdefinestyle{mystyle}{
    backgroundcolor=\color{backcolour},   
    commentstyle=\color{codegreen},
    keywordstyle=\color{magenta},
    numberstyle=\tiny\color{codegray},
    stringstyle=\color{codepurple},
    basicstyle=\ttfamily\scriptsize,
    breakatwhitespace=false,         
    breaklines=true,                 
    captionpos=b,                    
    keepspaces=true,                 
    numbers=left,                     
    numbersep=5pt,                  
    showspaces=false,                
    showstringspaces=false,
    showtabs=false,                  
    tabsize=2
}
\title{Unleashing the Power of Preemptive Priority-based Scheduling for Real-Time GPU Tasks}
\author{
    \IEEEauthorblockN{
        Yidi Wang, 
        Cong Liu,
        Daniel Wong,
        and Hyoseung Kim}
    \IEEEauthorblockA{
        University of California, Riverside\\
        ywang665@ucr.edu, 
        congl@ucr.edu,
        danwong@ucr.edu,
        hyoseung@ucr.edu
    }
}
\begin{document}
\pagestyle{plain}
\pagenumbering{arabic}

\maketitle

\begin{abstract}
%Real-time scheduling of GPU-utilizing tasks in multi-core systems is a challenging problem due to the complex interactions between CPU and GPU resources. In this paper, we propose two novel approaches for enabling preemption in real-time GPU scheduling: the IOCTL-based approach and the kernel thread approach. We present a comprehensive response time analysis that considers overlaps between different task segments, reducing pessimism in worst-case response time estimates. Through empirical evaluations and case studies, we demonstrate the effectiveness of the proposed approaches in improving schedulability and resource utilization. The results show that the proposed approaches outperform prior works in most cases with at most 40\% improvement in schedulability test compared to the previous synchronization-based approaches.
%Scheduling real-time tasks that utilize a GPU  of GPU-utilizing tasks in multi-core systems is a challenging problem due to the complex interactions between CPU and GPU resources. 
Scheduling real-time tasks that utilize GPUs with analyzable guarantees poses a significant challenge due to the intricate interaction between CPU and GPU resources, as well as the complex GPU hardware and software stack. While much research has been conducted in the real-time research community, several limitations persist, including the absence or limited availability of preemption, extended blocking times, and/or the need for extensive modifications to program code. In this paper, we propose two novel techniques, namely the kernel thread and IOCTL-based approaches, to enable preemptive priority-based scheduling for real-time GPU tasks. Our approaches exert control over GPU context scheduling at the device driver level and enable preemptive GPU scheduling based on task priorities. The kernel thread-based approach achieves this without requiring modifications to user-level programs, while the IOCTL-based approach needs only a single macro at the boundaries of GPU access segments. In addition, we provide a comprehensive response time analysis that takes into account overlaps between different task segments, mitigating pessimism in worst-case estimates. Through empirical evaluations and case studies, we demonstrate the effectiveness of the proposed approaches in improving taskset schedulability and timeliness of real-time tasks. The results highlight significant improvements over prior work, with up to 40\% higher schedulability, while also achieving predictable worst-case behavior on Nvidia Jetson embedded platforms.
\end{abstract}

\section{Introduction}

Real-time cyber-physical systems with GPU workloads have become increasingly prevalent in various domains including self-driving cars, autonomous robots, and edge computing nodes. This trend has been accelerated in recent years by the demand for learning-enabled components as most of their implementations heavily rely on the GPU stack. The scheduling problem of GPU-using tasks in these systems is therefore crucial to ensure timely execution and to meet stringent timing requirements. One of the key challenges here is effectively supporting prioritization and preemption, allowing higher-priority tasks to interrupt and temporarily suspend lower-priority GPU tasks whenever needed. This is particularly important in scenarios where critical high-priority tasks with stringent deadlines need to access GPU resources, while low-priority and best-effort tasks can tolerate such preemption to accommodate their execution. 

As of yet, the default scheduling policy of commercial GPU devices provides little control over the prioritization and preemption of GPU tasks, causing unpredictable task response time and instability in real-time systems. 
The real-time research community has recognized this issue since the early era of GPU computing and has proposed several solutions. In particular, the use of real-time synchronization protocols, such as MPCP~\cite{rajkumar1990real,patel2018analytical} and FMLP+~\cite{BB2014-FMLP+}, has been recognized as a promising way to manage GPU tasks in real-time systems with strong analyzable guarantees on the worst-case task response time. However, these approaches can suffer from long blocking time and priority inversion by lower-priority tasks since GPU access segments are handled non-preemptively. There have been attempts to support priority-based GPU scheduling with preemption capabilities~\cite{Kato2011_RGEM, Basaran2012, Zhou2015}, but they require significant modifications to GPU access code, lack analytical support, and more importantly, may not work properly if the system has processes with unmodified GPU code or graphics applications due to the time-shared GPU context switching behavior of the device driver~\cite{capodieci2018deadline,Bakita2023}. 
%In the past, the default unmanaged GPU scheduling approach provided little control over the prioritization of GPU tasks and allowed GPU tasks from different processes to interleave with each other, leading to unpredictable task response time and instability in real-time systems. On the other hand, synchronization-based approaches were commonly employed to manage GPU tasks in real-time systems. However, these approaches suffered from drawbacks including long blocking time caused by lower-priority tasks.

In this paper, we address the aforementioned challenges and limitations by proposing novel preemptive priority-based GPU scheduling approaches for real-time GPU task execution in multi-core systems with analyzable guarantees. 
Our work focuses on Nvidia GPUs, especially those on Tegra System-on-Chips (SoCs) used in embedded platforms like Jetson Xavier and Orin. 
%By leveraging preemptive techniques, we aim to enhance schedulability in multi-core systems. 
We propose two distinct approaches: kernel thread and IOCTL-based approaches, each offering unique advantages with different performance implications. 
These approaches work at the device driver level, and unlike existing techniques, they can protect the execution of real-time GPU processes from interference from best-effort non-real-time CUDA processes and graphics processes in the system. Specifically, the kernel-thread approach requires no modifications to user-level GPU code (both host and kernel code) at all, making it amenable to use with any type of workloads. This is particularly appealing to recent machine learning and computer vision applications as they are built on top of massive libraries that involve hundreds of different kernels. The IOCTL-based approach, on the other hand, requires a small modification to GPU access code, i.e., adding just one macro at the boundaries of GPU segments, but provides more fine-grained and efficient control of the GPU. 
Thanks to the strictly preemptive and priority-driven GPU scheduling behavior, both approaches are analyzable and allow us to derive response-time tests for schedulabiltiy analysis.

%Furthermore, we present a comprehensive analysis of the worst-case response time under two approaches, taking into account the overlaps between different task execution segments. Experimental results show that our proposed approaches bring substantial benefits in meeting real-time requirements compared to previous synchronization-based approaches. 

In summary, the paper makes the following contributions:
\begin{itemize}
    \item We propose the kernel-thread and IOCTL-based approaches, each of which enables preemptive priority-driven GPU scheduling in a multi-core system equipped with an Nvidia GPU. We give details of their implementations and discuss their runtime characteristics. 
    \item While it is important to run GPU segments according to their original task priority (esp. when task priority is assigned based on criticality), we find that assigning different priorities to GPU segments can yield a significant benefit in taskset schedulability. Our work allows this.
    \item We present a comprehensive analysis on the worst-case task response time under our two proposed approaches. In particular, our analysis for the IOCTL-based approach considers self-suspension and busy-waiting modes during GPU kernel execution and reduces pessimism by taking into account the overlaps between different task execution segments.
    \item Our work is implemented on the latest Nvidia Tegra driver and will be open sourced. Experimental results show that our approaches bring substantial benefits in taskset schedulability compared to previous synchronization-based approaches. A case study on Jetson Xavier and Orin platforms demonstrates the effectiveness of our work over the default GPU driver and the applicability for various generations of architectures.
%    \item We provided response analysis for these two approaches under self-suspension mode and busy-waiting mode and reduced the pessimism of the analysis by considering the potential overlaps between different task segments.
%    \item Finally, we demonstrated the effectiveness of our proposed approaches in enhancing the taskset schedulability through a diverse set of simulation experiments and a system evaluation on an embedded GPU platform.
\end{itemize}

\section{Background on Tegra GPU Scheduling}\label{sec:background}

Computational GPU workloads for Nvidia GPUs are often programmed using the CUDA library. These workloads are represented in \textit{kernels} and user-level processes can launch kernels to the GPU at runtime. CUDA provides processes with \textit{streams} to enable concurrent execution of kernels with a limited number of stream priority levels, e.g., only 2 in the Pascal architecture~\cite{xiang2019pipelined}. 
%The execution of kernels on streams is associated with a \textit{GPU context}, which represents a virtual address space and other runtime states on the GPU. Each GPU context is bound to a user-level process and cannot be shared among multiple processes.\footnote{CUDA Multi-Process Scheduling (MPS) circumvents this limitation and will be discussed in Sec.~\ref{sec:related_work}.} 
Since streams are bound to a user-level process that created them, the effect of stream scheduling and stream priority assignment is exerted only within each process boundary. 
The CUDA library is not a must for processes to access the GPU hardware. There are other low-level libraries for general-purpose GPU computing and graphics applications such as OpenCL and Vulkan. Programs built using different libraries co-exist in the system and they send GPU commands to the device driver.

At the device driver level, each process is associated with a \textit{GPU context}, which represents a virtual address space and other runtime states on the GPU side. Any process accessing the GPU has a separate GPU context, regardless of whether it uses the CUDA library or not in the user space, and GPU contexts from different processes are time-sliced to share the GPU hardware.
%GPU contexts from different user-level processes are time-sliced to share the GPU hardware. 
%In Tegra GPU scheduling, the hardware resources of the GPU are shared between multiple processes that utilize the GPU for computations. 
To ensure fairness and prevent resource contention, the Tegra GPU driver uses a scheduling policy that assigns entries in the ``runlist''.\footnote{In fact, there are multiple runlists but we refer to them as singular for simplicity. When scheduling, the driver cycles through each runlist to handle a higher volume of workloads, and this does not affect our proposed design.} The entries of the runlist represent the allocation of time slices to TSGs (Time-Sliced GPUs) that are directly associated with processes. Each TSG has multiple ``channels'', each of which contains a stream of GPU commands received from its process.

Fig.~\ref{fig:runlist_and_tsgs} illustrates the runlist filled with TSG entries. Each TSG data structure maintains state attributes, such as the process ID (pid) that the TSG is associated with, a list of channels, and the time slice. The runlist is scheduled in a round-robin manner. Hence, for each TSG entry, the GPU executes commands of the corresponding GPU context during the time slice. 
The number of entries for a TSG on the runlist is determined by TSG priority. However, as of this writing, there is no interface provided to configure the time slice length and TSG priority settings from the user space. 

%Fig.~\ref{fig:runlist_and_tsgs} shows the assembly of a runlist and the scheduling of TSGs. Each entry in the runlist points to a  TSG. The runlist then guides the scheduling process in a round-robin manner. The duration for which a TSG can run before being swapped out is determined by its assigned time slice. Each TSG is managed by a data structure maintaining some attributes such as the pid that the TSG is associated with, a list of channels and the time slice.

The construction of the runlist in the Tegra GPU driver follows multiple steps. First, processes submit their commands to specific channels associated with their TSGs. Once the commands are submitted, the corresponding TSGs are added to the runlist which is protected by a mutex lock. During the construction of the runlist, TSGs with higher priority are granted a larger time slice and more entries on the runlist.
After construction, the runlist is repeatedly scheduled by the GPU hardware in a round-robin manner. Each entry on the runlist runs for its time slice, and once timeout, the TSG of the next entry is executed. This repeats until all the commands of all active TSGs on the runlist are consumed.

In summary, the Tegra GPU driver employs a \textit{time-sliced round-robin} scheduling approach. This approach, however, does not respect the OS-level scheduling priority of processes, which is the main control knob to tune real-time performance in practice. This causes high-priority real-time tasks to experience unpredictable waiting time when the system accepts new best-effort tasks. In addition, it is not easy for the user to observe such driver-level behavior because GPU profiling tools, such as Nvidia Nsight Systems, do not report GPU context switching events and each kernel execution time appears to be inflated with no time slice information. These issues contribute to difficulties in understanding and predicting the runtime behavior of GPU-enabled real-time systems.

\begin{figure}[]
\vspace{-10pt}
    \centering
    \includegraphics[width=0.7\linewidth]{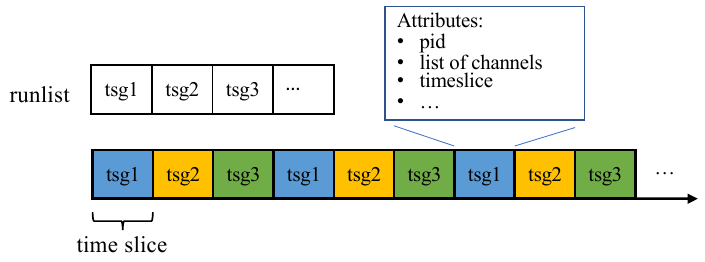}
    \vspace{-0.5\baselineskip}\caption{Runlist and time-sliced GPU scheduling}
    \label{fig:runlist_and_tsgs}
\end{figure}

\begin{table*}[t]
\centering
\begin{tabular}{|ll|c|c|c|c|c|}
\hline
\multicolumn{2}{|l|}{} & No blocking & \begin{tabular}[c]{@{}c@{}}Task priority \\ respected\end{tabular} & \begin{tabular}[c]{@{}c@{}}No source code \\ modification\end{tabular} & \begin{tabular}[c]{@{}c@{}}Analyzable \\ response time\end{tabular} & Inter GPU context \\ \hline
\multicolumn{1}{|l|}{\multirow{2}{*}{\begin{tabular}[c]{@{}l@{}}Prior \\ work\end{tabular}}} & \begin{tabular}[c]{@{}l@{}}Unmanaged GPU  (default driver)\end{tabular}                                                     & \ding{53}   & \ding{53}                                                          & \ding{51}     & \ding{53}  & \ding{51} \\ \cline{2-7} 
\multicolumn{1}{|l|}{}                                                                        & \begin{tabular}[c]{@{}l@{}}Sync.-based  approaches~\cite{Elliott2012,Elliott_RTS13,Elliott_RTSS13,patel2018analytical}\end{tabular} & \ding{53}   & \ding{51}                                                          & \ding{53}     & \ding{51} & \ding{51} \\ \cline{2-7} 
\multicolumn{1}{|l|}{}                                                                        & \begin{tabular}[c]{@{}l@{}}GPU partitioning  approaches~\cite{Bakita2023,Saha2019,wang2021balancing,Jain2019,wang2022towards,zou2023rtgpu,wu2015enabling}\end{tabular} & \ding{51}   & \ding{53}
                 & \ding{53} %\ding{51}\footnote{\cite{Bakita2023}}/\ding{53}\footnote{\cite{Saha2019,wang2021balancing,Jain2019,wang2022towards,zou2023rtgpu,wu2015enabling}}     
                 & \ding{51}      & \ding{53}                      \\ \cline{2-7}
\multicolumn{1}{|l|}{}                                                                        & \begin{tabular}[c]{@{}l@{}}Preemptive GPU  scheduling~\cite{Kato2011_RGEM,Basaran2012,Zhou2015,capodieci2018deadline,Han2022_reef}\end{tabular} & \ding{51}    & \ding{51} 
                 & \ding{53}    & ?      & \ding{51}                      \\ \hline
\multicolumn{1}{|l|}{\multirow{2}{*}{Ours}}                                                   & \begin{tabular}[c]{@{}l@{}}Kernel thread  approach\end{tabular} & \ding{51}   & \ding{51}                                                          & \ding{51}     & \ding{51}  & \ding{51} \\ \cline{2-7} 
\multicolumn{1}{|l|}{}                                                                        & \begin{tabular}[c]{@{}l@{}}IOCTL-based  approach\end{tabular}   & \ding{51}   & \ding{51}                                                          & \ding{53}     & \ding{51}  & \ding{51} \\ \hline
\end{tabular}
\caption{Comparison of different GPU scheduling approaches
%\red{ (Add citations in the column; check if \cite{Bakita2023} doesn't need any source modification. Can we add another row of preemptive GPU scheduling?)} \yidi{at least it needs some user-space APIs to enable partitioning } 
}
\label{tab:comparison_with_prior_work}
\end{table*}

\section{Related Work}\label{sec:related_work}

Table~\ref{tab:comparison_with_prior_work} gives a summary of comparison between representative GPU scheduling approaches. Below we discuss prior work in various categories.

\noindent\textbf{Synchronization-based GPU access control.} Real-time synchronization protocols have played an important role in managing access to GPUs~\cite{Elliott2012,Elliott_RTS13,Elliott_RTSS13,patel2018analytical}. With this approach, GPUs are modeled as mutually-exclusive shared resources and tasks are made to acquire locks to enter code segments accessing the GPUs, i.e., critical sections. MPCP~\cite{rajkumar1990real} and FMLP+~\cite{brandenburg2014fmlp+} are prime examples for multi-core systems with GPUs and the use of such protocols enables analytically provable worst-case task response time bounds.
However, the synchronization-based approach may suffer from blocking time from lower-priority tasks holding a lock and priority inversion caused by the priority boosting mechanism employed in these protocols~\cite{HKim2017}. This becomes particularly problematic when tasks busy-wait on long kernel execution, as discussed in~\cite{patel2018analytical}.

\noindent\textbf{Preemptive GPU scheduling.} Several previous studies~\cite{Kato2011_RGEM, Basaran2012, Zhou2015} have proposed software-based mechanisms to enable preemptive scheduling of real-time GPU tasks. These approaches introduce the concept of decomposing long-running GPU kernels into smaller blocks, allowing preemption to occur at the boundaries of these blocks. By enabling preemptive scheduling, the waiting time of high-priority tasks can be significantly reduced, improving responsiveness and offering a better chance to meet timing requirements. However, the cost of utilizing these mechanisms is not trivial as they necessitate a significant rewriting of user programs~\cite{Basaran2012} or an implementation of a custom CUDA library with device driver modifications~\cite{Basaran2012,Zhou2015}.
Capodieci et al.~\cite{capodieci2018deadline} proposed a hypervisor-based technique to support preemptive Earliest Deadline First (EDF) GPU scheduling of virtual machines (VMs) in a virtualized environment. This approach achieves GPU performance isolation among VMs and shares some similarities with our work, in terms of controlling GPU context switching at the device driver level. However, it lacks consideration of the end-to-end response time of tasks involving CPU and GPU interactions, which is a specific focus of our work.
Recently, Han et al.~\cite{Han2022_reef} proposed REEF, which enables microsecond-scale, reset-based preemption for concurrent DNN inferences on GPUs. This approach proactively kills and restarts best-effort kernels leveraging the idempotent nature of most DNN inference, but it is not applicable to a wide range of applications.

\noindent\textbf{GPU partitioning.} As a GPU is composed of multiple compute units, e.g., Streaming Multiprocessors (SMs) on Nvidia GPUs, there have been attempts to spatially partitioning the GPU and making them accessible by multiple real-time tasks in parallel~\cite{Saha2019,wang2021balancing,Jain2019,wang2022towards,zou2023rtgpu}. They use SM-centric kernel transformation~\cite{wu2015enabling} to run kernels on their designated SMs/partitions. As this involves extensive program modifications and may suffer from misbehaving tasks, Bakita and Anderson~\cite{Bakita2023} recently proposed a user-space library that minimizes program changes and offers much better usability and portability. With GPU partitioning, task performance is greatly affected by partitioning results, e.g., a high-priority task may suffer performance degradation due to the small number of SMs assigned to it or experience blocking if its SMs are shared with other tasks. In addition, 
all these approaches work within a single GPU context, i.e., one process; hence, multiple processes with separate contexts will still time-share the GPU, as discussed in Sec.~\ref{sec:background}. Note that our work does not compete with GPU partitioning techniques. They can be used within each process and our work enables predictable scheduling of GPU processes.

%\noindent\textbf{Unmanaged GPU access.} In a system where GPU access requests come from different processes, the task or processes' priority are not always respected by the driver, leading to unpredictable waiting time for real-time tasks. Moreover, the GPU kernel execution time can be potentially unbounded due to the aforementioned time slices. While users often use profiling tools such as NVIDIA Nsight System to monitor when and how long the kernel runs, the exact effects on kernel execution times may not be reflected in the tools. This can lead to difficulties in understanding and predicting the system's real-time behaviors.

% \section{Background}

\section{System Model}

%\subsection{System Model}
We consider a multi-core system with a GPU, which is common in today's embedded hardware platforms like Nvidia Jetson. The CPU has $\omega$ identical cores and the GPU is yet another processing resource used by compute-intensive tasks. The GPU consists of internal resources including Execution Engines (EEs) and Copy Engines (CEs). 
%, which are time-shared among tasks. Tasks that require different GPU resources can be co-scheduled in parallel; otherwise, it is not allowed. For example, at a time point, a task that requires only the CE can be co-scheduled with a task that requires only the EE. This allows the GPU to operate at full capacity by maximizing the utilization of its resources. 
The EE and CE operations of a single process can be done asynchronously at runtime, and during pure GPU execution, the process can either busy-wait or self-suspend on the CPU. However, different processes cannot use the GPU at the same time because of the time-sharing scheduling of GPU contexts at the GPU device driver, as discussed before. 

\smallskip\noindent\textbf{Task Model. }
We consider a taskset $\Gamma$ consisting of $n$ sporadic tasks (processes) with fixed priority and constrained deadlines.\footnote{We assume tasks are processes and use them interchangeably in this paper.} 
Each task is assumed to be preallocated to one CPU core with no runtime migration, i.e., partitioned multiprocessor scheduling.
The execution of a task is an alternating sequence of CPU segments and GPU segments. CPU segments run entirely on the CPU and GPU segments involve GPU operations such as memory copy and kernel execution. 
A task $\tau_i$ can be characterized as follow:
$$    \tau_i := (C_i, G_i, T_i, D_i, \eta^c_i, \eta^g_i) $$

\begin{itemize}
    \item $C_i$: the cumulative sum of the worst-case execution time (WCET) of all CPU segments of task $\tau_i$.
    \item $G_i$: the cumulative WCET of GPU segments (including memory copies and kernels) of $\tau_i$.
    \item $T_i$: the minimum inter-arrival time of each job of $\tau_i$.
    \item $D_i$: the relative deadline of each job of $\tau_i$, and is smaller than or equal to the period, i.e., $D_i \le T_i$.
    \item $\eta^c_i$: the number of CPU segments in each job of task $\tau_i$.
    \item $\eta^g_i$: the number of GPU segments in each job of task $\tau_i$; if $\tau_i$ does not use the GPU, $\eta^g_i=0$.
\end{itemize}
\begin{figure}[h]
\vspace{-10pt}
    \centering
    \includegraphics[width=0.9\linewidth]{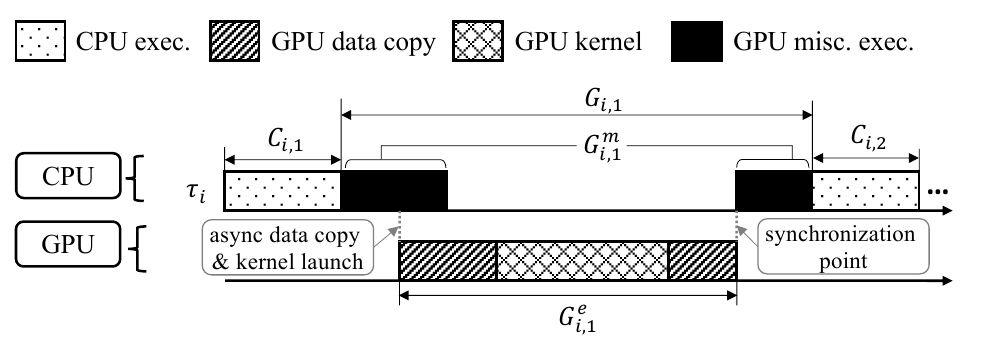}
    \caption{Task model example}
    \label{fig:task_model_orig}
\end{figure}
\vspace{-10pt}
Fig.~\ref{fig:task_model_orig} depicts these parameters, and by default, in each task, all the segments have the same priority. We use $S_{i,j}$ to denote the WCET of the $j$-th segment of type $S$ for task $\tau_i$, where $S$ represents different types of segments, e.g., $G$ for GPU and $C$ for CPU, and $S_i = \sum_{j=1}^{\eta^S_i}S_{i,j}$. Each GPU segment $G_{i,j}$ can be characterized as follow:
$$ G_{i,j} := (G^m_{i,j}, G^e_{i,j}) $$

\begin{itemize}
    \item $G^m_{i,j}$: the cumulative WCET of miscellaneous CPU operations in the $j$-th GPU segment of task $\tau_i$, $G_{i,j}$.
    \item $G^e_{i,j}$: the WCET of GPU workload in $G_{i,j}$ that requires {\em no} CPU intervention; and we call it a {\em pure GPU segment}.
\end{itemize}

% \begin{definition}[Pure GPU segment] The pure GPU segment execution of a task $\tau_i$ refers to the portion of a GPU segment which does not require any execution on the CPU side, i.e., $G^e_{i,j}+ G^d_{i,j}$ for $G_{i,j}$.
% \label{def:pure_gpu_segments}
% \end{definition}
% $G^d_{i,j}$ represents the sum of $G^{d_{hd}}_{i,j}$ and $G^{d_{dh}}_{i,j}$, i.e. $G^d_{i,j}$ = $G^{d_{hd}}_{i,j}$ + $G^{d_{dh}}_{i,j}$, where $G^{d_{dh}}_{i,j}$ and $G^{d_{hd}}_{i,j}$ stand for data copy from divide to host and host and device, respectively. 
$G^m_{i,j}$ is the time for launching a CUDA kernel, overhead for communicating with the GPU driver, and miscellaneous CPU operations for issuing other GPU commands. 
$G^e_{i,j}$ is the time for GPU data copy and kernel execution, during which task $\tau_i$ can either busy-wait or self-suspend on the CPU.
Note that $G_{i,j} \leq G^m_{i,j} + G^e_{i,j}$ because the worst-case of $G^m$ and $G^e$ are not necessarily happening on the same control path and they may execute in parallel in asynchronous mode~\cite{patel2018analytical}.
% \blue{Therefore, we define $G_{i,j}$ as the sum of the WCET of all of its elements, i.e., $G_{i,j}=G^m_{i,j}+G^e_{i,j}+G^d_{i,j}$. But it is worth noting that the actual length of a GPU segment can be shorter than $G_{i,j}$ because the worst cases of $G^m$, $G^e$, and $G^d$ are not necessarily happening on the same control path and they may execute in parallel in asynchronous mode.}

We also consider best-case execution time, denoted by a check mark (e.g.,  $\widecheck{C_i}$), to improve our analysis in Sec.~\ref{sec:analysis_improved}. For readability, we will explain the parameters that follow this notation where they are used.

\section{Priority-based Preemptive GPU Scheduling}
%\subsection{GPU Segments Execution Control Policy}

We present two runtime approaches, kernel thread and IOCTL-based, for preemptive priority-based execution of GPU segments from real-time tasks. The first approach involves a kernel thread that polls for any changes in the status of tasks and updates the runlist accordingly. The second approach involves a set of user-level runtime macros that notify the GPU driver to update the runlist.

The kernel thread approach is easier to use as it does not require any modification to program code, but it updates the runlist only at job execution level, and this may lead to resource underutilization. The IOCTL-based approach provides finer-grained control over GPU segments, but this requires user-level code modification although small. The details of these two approaches are presented in this section.

% \begin{figure}[h]
%     \centering
%     \includegraphics[width=\linewidth]{}
%     \caption{Comparison with prior work}
% \end{figure}

\begin{algorithm}[t] 
\footnotesize
\begin{algorithmic}[1]  
%\State $prev\_state[1..n] = \emptyset$ \label{alg1:prev_state}
\Procedure{kernelThreadRunlistUpdate}{}
    \While{true}
        %\If{$\tau_i$'s current state $\neq prev\_state[i]$} \label{alg1:state_changed_detected}
        \If{$\tau_i$'s state is changed from the previous cycle} \label{alg1:state_changed_detected}
%            \If{$\tau_i$ is a GPU-utilizing task}
                \State $\tau_h \gets$ the highest-priority GPU-using ready task \label{alg1:get_tau_h}
                \If{$\tau_h$ exists} \label{alg1:tau_h_exists}
                    \State Add $\tau_h$'s associated TSGs to runlists
                    \State Remove other TSGs from runlists
                \Else \Comment{no \texttt{RUNNING} real-time tasks}
                    \State Add all active TSGs to runlists \label{alg1:done_updating_runlist}
                \EndIf
            %\EndIf
        \EndIf
%        \State $prev\_state[i] \gets \tau_i$'s current state
        \State Wait for the next polling cycle
    \EndWhile
\EndProcedure
\end{algorithmic}  
\caption{Kernel Thread Approach}
\label{alg:kernel_thread_approach}  
\end{algorithm}

\subsection{Kernel Thread Approach}
\label{sec:kernel_thread_approach}
%\noindent\textbf{Kernel Thread Approach. }
The kernel thread approach creates a kernel thread that is initiated along with the driver software. It continuously polls for changes in task status (\texttt{task\_struct::state}), with an interval of a sub-scheduling time quantum\footnote{We used 1 ms in our implementation, but the cost of checking task status itself without updating the runlist is negligibly small.} on a designated CPU core, to 
%avoid excessive CPU usage and 
minimize interference to other tasks. When a task status change is detected, e.g., from \texttt{TASK\_RUNNING} to \texttt{TASK\_STOPPED}, it updates the runlist.

The procedure is shown in Alg.~\ref{alg:kernel_thread_approach}. 
%We keep an array $prev\_state$ to maintain the state of $n$ tasks, where $n$ is the number of TSGs in the device driver\footnote{This $n$ is constant. The Tegra driver declares a fixed number of TGSs at initialization, and no new TSG object is created at runtime.} (line~\ref{alg1:prev_state}). 
At every polling cycle, the kernel thread checks if any task $\tau_i$ with an active TSG has changed its state from the previous cycle (line~\ref{alg1:state_changed_detected}). 
%It first checks whether this task is GPU-utilizing. If not, the thread polls for the next task status change; 
If yes, 
it obtains a ready GPU task (\texttt{TASK\_RUNNING}) with the highest real-time priority (\texttt{task\_struct::rt\_priority}) as $\tau_h$ (line~\ref{alg1:get_tau_h}).\footnote{In this work, we assume that each task has a unique real-time priority. This assumption can be easily achieved by an arbitrary tie-breaking rule in practice, as found in the literature.} Next, the scheduler decides whether the runlist should be updated. If $\tau_h$ exists, the scheduler removes all other TSGs from the runlist but only keeps $\tau_h$'s associated TSGs in it. Otherwise, it means that no GPU-using real-time task is ready to run, and in this case, the scheduler puts all other active TSGs back into the runlist, allowing non-real-time best-effort tasks to make their progress (lines~\ref{alg1:tau_h_exists} to~\ref{alg1:done_updating_runlist}).

However, since the scheduling decision is made only when a task state changes, this approach may underutilize GPU resources. For instance, when a high-priority task $\tau_h$ starts to run, the currently-running task $\tau_i$'s TSGs are removed from the runlist to reserve the GPU for $\tau_h$. However, while $\tau_h$ is executing its CPU segments, the GPU may remain idle, leading to underutilization of GPU resources.
The kernel thread approach has another limitation. Self-suspension during GPU execution is not allowed and a task must spin on the CPU side to maintain its task state. This is because changes in task state due to self-suspension can be misinterpreted by the kernel thread. This may lead to incorrect scheduling decisions and cause unnecessary runlist updates.

\begin{figure}[t]
\centering
    \begin{subfigure}[b]{\linewidth}
    \centering
        \includegraphics[width=0.9\linewidth]{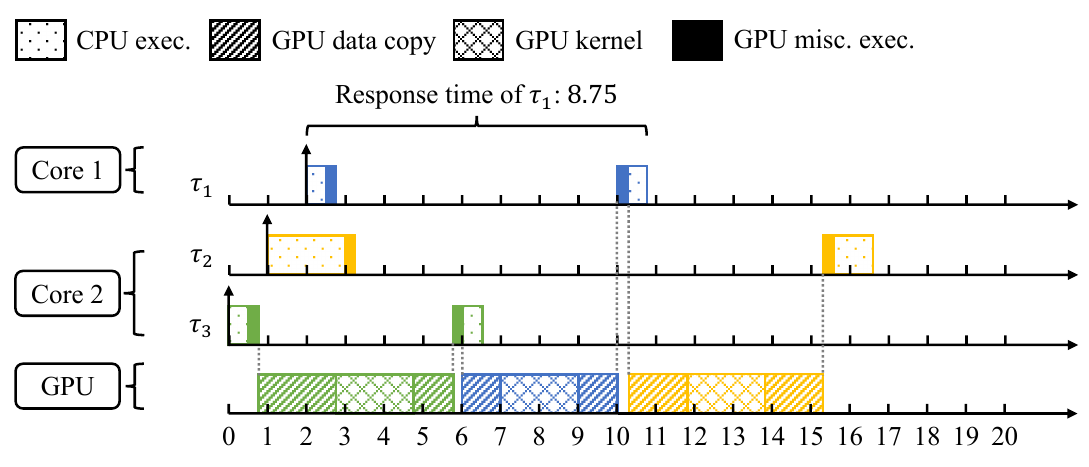}
        \vspace{-0.5\baselineskip}\caption{Schedule under synchronization-based approach}
        \label{fig:sync_based_approach}
    \end{subfigure}
    \begin{subfigure}[b]{\linewidth}
    \centering
        \includegraphics[width=0.9\linewidth]{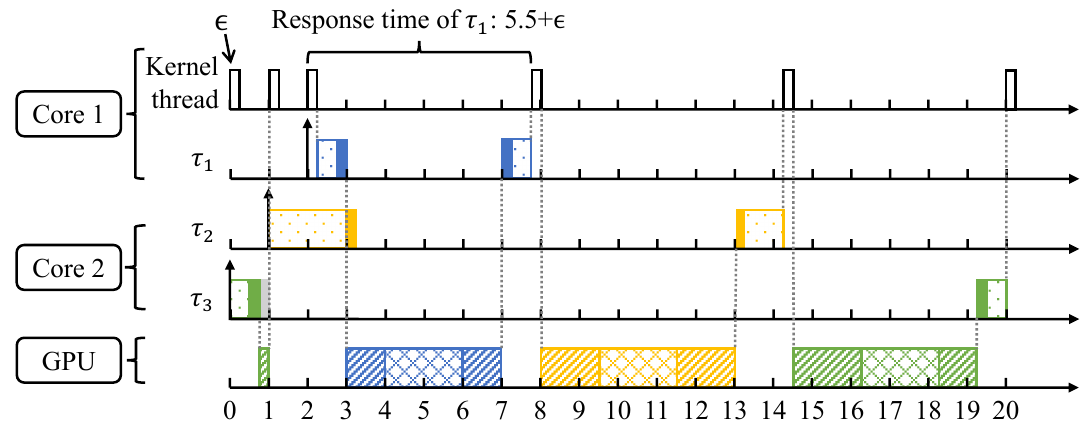}
        \vspace{-0.5\baselineskip}\caption{Schedule under kernel thread approach}
        \label{fig:kthread_approach}
    \end{subfigure}
    \begin{subfigure}[b]{\linewidth}
    \centering
        \includegraphics[width=0.9\linewidth]{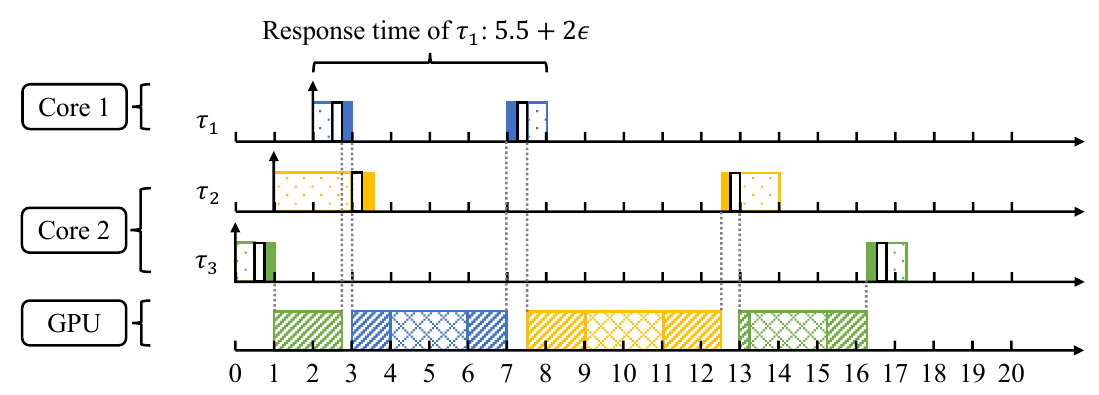}
        \vspace{-0.5\baselineskip}\caption{Schedule under IOCTL-based approach}
        \label{fig:ioctl_approach}
    \end{subfigure}
\caption{Example schedule of three tasks under different approaches (priority $\tau_1 > \tau_2 > \tau_3$)}
\end{figure}

Figs.~\ref{fig:sync_based_approach} and \ref{fig:kthread_approach} compare task schedules under the conventional synchronization-based approach and our kernel thread approach. $\tau_1$ is running on Core 1, while $\tau_2$ and $\tau_3$ are running on Core 2. 
The synchronization-based approach shown in Fig.~\ref{fig:sync_based_approach} treats the entire execution of a GPU segment as a critical section. Tasks are serviced in order based on their task priorities. This approach ensures that each task completes its GPU segments in a deterministic and predictable manner.
%, minimizing interference and reducing the overall response time of the tasks. The same taskset is used in the example schedule in Fig.~\ref{fig:sync_based_approach}.
However, as can be seen in the figure, $\tau_1$ is delayed by the GPU segments of all of its lower-priority tasks and gets a response time of 8.75. 
On the other hand, our kernel thread approach avoids this delay by allowing preemption during GPU segment execution. In Fig.~\ref{fig:kthread_approach}, the kernel thread is on Core 1 along with $\tau_1$ and the runlist update time is denoted as $\epsilon$.\footnote{Prior work~\cite{capodieci2018deadline} reports that the runlist update overhead including GPU context switching can take from 50 to 750~$\mu$s. Our analysis in Sec.~\ref{sec:analysis} takes into account $\epsilon$ and our measurements in Sec.~\ref{sec:system_eval} show similar results.} At $t=1$, the kernel thread updates the runlist and causes preemption of $\tau_3$'s GPU segment by removing its associated TSGs from the runlist. Task $\tau_1$ is delayed by $\epsilon$ due to running on the same core as the kernel thread. The GPU is then allocated to the highest-priority task, $\tau_1$, until it completes. The response time of $\tau_1$ is 5.5+$\epsilon$, much smaller than that of the synchronization-based approach.

Although it is not depicted in the above figure, there could be a delay for GPU preemption to take effect because Nvidia GPUs support preemption at the pixel level for graphics tasks and the thread-block level for compute tasks~\cite{nvidia_preemption}. Such delay is thus very small compared to the length of GPU kernels, and for compute tasks, it can be separately measured or estimated by the maximum length of a single thread block among all kernels. We assume that $\epsilon$ includes this delay in it.

\subsection{IOCTL-based Approach}
%\noindent\textbf{IOCTL-based Approach. }
%To mitigate the limitations of the kernel thread approach, we also implement t
%Unlike the kernel thread approach which updates the runlist whenever a higher-priority GPU-using task is ready to run, the IOCTL-based approach ensures that the runlist is updated only when needed.
%more frequently and efficiently, based on the type and progress of the GPU operations. 
The IOCTL-based approach is a user-level runtime method for efficient control of GPU segments in the runlist. 
To implement this method, we add two macros that allow user programs to indicate the beginning and completion of a GPU segment. When the macro is called, it generates an IOCTL command and sends it to the GPU driver through a file descriptor, and requests the driver to update the runlist accordingly.

%\begin{lstlisting}[language=C++, caption=Example Usage of IOCTL-based approach]
\lstinputlisting[caption=Example Usage of IOCTL-based approach, label={lst:ioctl_sample}, language=C++]{ioctl_sample.cpp}

The macros introduced are \texttt{cudaStreamBegin()} and \texttt{cudaStreamEnd()}, which are wrappers to our IOCTL syscalls. A sample user program is listed in Listing~\ref{lst:ioctl_sample}. The code between them is a GPU segment. 
Unlike the kernel thread approach in which the runlist update is triggered at the boundaries of each task's job execution, with the help of these two macros, we can define the boundaries of GPU segments and allows GPU segments and CPU segments from different tasks to be co-scheduled.
In the Tegra driver, the default runlist update is protected by a mutex lock. As the IOCTL-based approach allows multiple tasks to make these calls concurrently, we replace the default mutex with a real-time mutex, i.e., \texttt{rt\_mutex}, to reduce the blocking time as well as prevent priority inversions.
% We replace the default one with a priority-based lock to reduce the blocking time.

\begin{algorithm}[t] 
\footnotesize
\begin{algorithmic}[1]  
\State $task\_pending = \emptyset$
\State $task\_running = \emptyset$
\State \Comment{Note that a task exclusively exists in one of these two lists}
\Procedure{IOCTLRunlistUpdate}{$\tau_i, add$}        
    \If{$add$} \Comment{$\tau_i$ requests to be added}
        \If{$\tau_i$ is not a real-time task} \label{alg2:tau_i_not_rt}
            \If{no real-time task is in $task\_running$}
                \State Add $\tau_i$ to $task\_running$
            \Else
                \State Add $\tau_i$ to $task\_pending$ \label{alg2:tau_i_not_rt_done}
            \EndIf
        \Else \Comment{$\tau_i$ is a real-time task} \label{alg2:tau_i_rt}
            \State $\tau_h \leftarrow$ the highest-priority task in $task\_running$
            \If{$\tau_i.rt\_priority > \tau_h.rt\_priority$}
                \State Add $\tau_i$ to $task\_running$
                \State Move $\tau_h$ to $task\_pending$
            \Else
                \State Add $\tau_i$ to $task\_pending$ \label{alg2:tau_i_rt_done}
            \EndIf
        \EndIf
    \Else \Comment{$\tau_i$ requests to be removed} \label{alg2:tau_i_to_be_removed}
        \State $\tau_k \leftarrow$ the highest-priority RT task in $task\_pending$
        \If{$\tau_k$ exists}
            \State Move $\tau_k$ to $task\_running$
            \State Remove $\tau_i$ from $task\_running$
        \Else \Comment{no pending real-time task}
            \State $task\_running \gets task\_pending$
            \State $task\_pending \gets \emptyset$ \label{alg2:tau_i_to_be_removed_done}
        \EndIf
    \EndIf
    \State Add all TSGs of tasks in $task\_running$ to the runlist
\EndProcedure
\end{algorithmic}  
\caption{IOCTL-based Approach}
\label{alg:ioctl_approach}  
\end{algorithm}

The procedure to update the runlist under this approach is shown in Alg.~\ref{alg:ioctl_approach}. To track which tasks are in the runlist and which tasks are pending, two bitmaps are maintained in the GPU driver. The procedure is started by an IOCTL command which is wrapped by our macro. 
When a caller task $\tau_i$ requests to be added to the runlist (through \texttt{cudaStreamBegin()}), the scheduler implemented inside the IOCTL function first checks whether $\tau_i$ is a real-time task. If it is not, the scheduler checks whether there is any real-time task that is currently running and decides whether to add $\tau_i$ to the runlist or add it to the pending list (line~\ref{alg2:tau_i_not_rt} to~\ref{alg2:tau_i_not_rt_done}). If $\tau_i$ is a real-time task, the scheduler checks the priority of $\tau_i$ relative to the currently-running task $\tau_h$. If the priority of $\tau_i$ is higher than $\tau_h$, the scheduler preempts the GPU execution of $\tau_h$ and moves it to the pending list, and $\tau_i$ is added to the runlist. Otherwise, $\tau_i$ is added to the pending list (line~\ref{alg2:tau_i_rt} to~\ref{alg2:tau_i_rt_done}).
If $\tau_i$ notifies the driver about the completion through \texttt{cudaStreamEnd()}, the scheduler first finds the highest-priority task $\tau_k$ in the pending list. If $\tau_k$ exists, it is added to the runlists. Otherwise, if there are only best-effort tasks, the scheduler add them to the runlist to resume their progress (line~\ref{alg2:tau_i_to_be_removed} to~\ref{alg2:tau_i_to_be_removed_done}).

Fig.~\ref{fig:ioctl_approach} shows an example schedule under the IOCTL-based approach using the same taskset as in the previous section. Unlike the kernel thread approach, $\tau_3$'s GPU segments are not preempted until $\tau_1$ starts its GPU kernel execution. This strategy is followed in the remaining schedule. %In this case, the response time of task $\tau_1$ is $5.5 + 2\epsilon$.

The IOCTL-based approach allows fine-grained GPU resource control and ensures prompt execution of high-priority tasks. However, user-level code modification is required, which can make the kernel thread approach more appealing. 
%\red{Additionally, the notification procedure can cause task blocking and extra delay; 
%it is worth noting that this approach introduces additional delay and blocking time 
%this delay is due to the need for extra synchronization points to mark the boundaries of each task execution segment and notify the driver to update the runlists. This part is modeled as critical section access and will be further analyzed in the next section.}\todo{This cost also happens for the kernel thread. Otherwise, we would need two different $\epsilon$ values, e.g. $\epsilon^k$ and $\epsilon^I$}

\subsection{GPU Segment Priority Assignment}\label{sec:priority_assignment}
%The original NVIDIA driver (unmanaged GPU) provides no GPU segment execution control policy, and simply executes the GPU segments in a best-effort manner. An example schedule is provided in Fig.~\ref{fig:naive_nvidia} with the same taskset used before. In this schedule, all GPU commands are launched asynchronously and interleaved GPU execution is prevalent. This leads to indeterministic response times for the tasks.

In both kernel thread and IOCTL-based approaches, GPU segments are executed following their OS-level task priorities. 
%tasks are assigned fixed priorities, with GPU segments also preemptive.  
In the kernel thread approach, the kernel thread has the highest priority, and the preemption occurs at job execution boundaries. 
In the IOCTL-based approach, preemption can occur at segment boundaries. 

To improve taskset schedulability, we can assign separate priority to the GPU segments of a task, different from its CPU priority. We adopt Audsley's approach for this purpose~\cite{Audsley2007OPTIMALPA}. Hence, if the schedulability test given in the next section determines a taskset is unschedulable, we iterate through all tasks from the lowest to the highest CPU priority and check whether each priority level can be assigned to the GPU segments of a task without causing the taskset to fail the schedulability test. To prevent deadlocks, we maintain the relative priority order of GPU segments identical to their corresponding CPU segments (i.e., OS-level priorities) for tasks executing on the same CPU core. For instance, consider two tasks $\tau_1$ and $\tau_2$ assigned to the same CPU, with CPU priority $\pi^c_1 > \pi^c_2$. If our algorithm suggests a GPU priority order where $\pi^g_1 < \pi^g_2$, we treat this assignment as infeasible.

\section{End-to-End Response Time Analysis}
\label{sec:analysis}
This section presents a comprehensive analysis on the end-to-end response time of tasks involving CPU and GPU interactions. We first give analysis under two approaches proposed in the previous section. We then introduce a technique to reduce the pessimism of our analysis and provide a tighter bound. 
%Finally, we propose an approach to assigning priorities to GPU segments that are different from the CPU priorities, which can improve the schedulability test. Within the analysis, we made the following assumptions:
% \begin{itemize}
%     \item The preemption cost is zero.
% %    \item \red{The time of the system is discrete, and the preemptions occur at the boundary of discrete time units.}
%     \item There is no overhead between the segments.
% \end{itemize}
% \todo{Can we add these to the analysis?}

\subsection{Baseline Analysis}
\label{section:sched_analysis:baseline_analysis}
Our baseline analysis provides a conservative upper bound on the worst-case task response time under the two proposed approaches. In our model, preemption can be divided into two types: (i) {\em CPU preemption}: a CPU segment of a task $\tau_i$ is preempted by a CPU segment of a higher-priority task $\tau_h$ running on the same CPU core, and (ii) {\em GPU preemption}: a GPU segment of a task $\tau_i$ is preempted by a GPU segment of a higher-priority task $\tau_h$, regardless of which CPU core is assigned to $\tau_h$. Based on these, we develop a response-time analysis for the two proposed approaches in the following.

\begin{figure}[t]
\centering
    \begin{subfigure}[b]{\linewidth}
    \centering
        \includegraphics[width=0.9\linewidth]{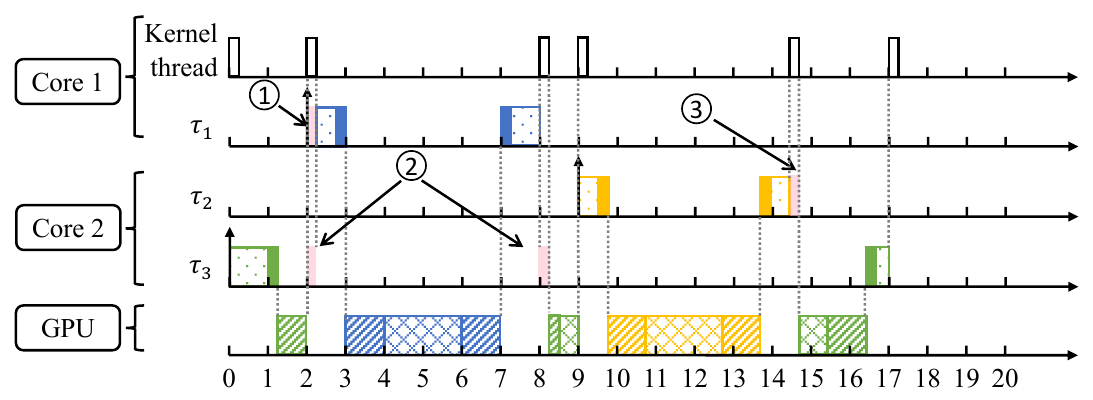}
        \vspace{-0.5\baselineskip}\caption{Kernel thread approach}
        \label{fig:kthread_analysis}
    \end{subfigure}
    \begin{subfigure}[b]{\linewidth}
    \centering
        \includegraphics[width=0.9\linewidth]{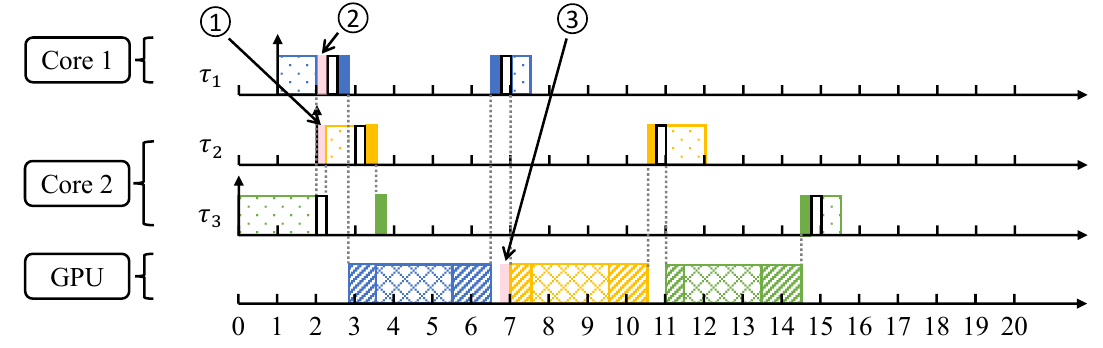}
        \vspace{-0.5\baselineskip}\caption{IOCTL-based approach}
        \label{fig:ioctl_delay}
    \end{subfigure}
\caption{Example schedule of three tasks with runlist update delay (task priority: $\tau_1 > \tau_2 > \tau_3$)}
\end{figure}

\subsubsection{Preemptive GPU under Kernel Thread Approach}
%\red{The polling kernel thread approach requires the kernel thread to acquire and release a mutex created by the GPU data structure to update the runlists. This introduces a penalty to task $\tau_i$'s response time due to a global critical section with a constant execution time $\epsilon$.} \todo{In the kernel thread approach, $\epsilon$ is just the kernel thread's execution time to update the runlist, not the cost to enter the critical section.} 
The kernel thread runs on one designated CPU core and updates the runlist on behalf of other GPU-using tasks. 
Task self-suspension is not applicable here, and only the busy-waiting mode is available to monitor task states and prevent incorrect runlist updates in the middle of job execution.

We first identify the delay for a task $\tau_i$ caused by the runlist updates of the kernel thread. %The following lemma holds for this delay:
\begin{lemma} \label{lm:kernel_thread_delay}
The runlist update delay from the kernel thread for a job of task $\tau_i$ is upper-bounded by:
\begin{equation}
\small
\begin{aligned}
    K_i &= x_i \cdot (2\epsilon + \sum_{\tau_h \in hp(\tau_i)\land\eta_h^g>0} \lceil \frac{R_i+J_h}{T_h} \rceil \cdot 2\epsilon)
\end{aligned}
\label{eq:kernel_thread_delay}
\end{equation}
where 
\begin{equation*}
\small
\begin{aligned}
    x_i = 
    \begin{cases}
        1 & \text{, $\tau_i$ is a GPU-using task ($\eta_i^g>0$) or runs on the} \\
        & \text{   same core as the kernel thread}\\
        0 & \text{, otherwise}
    \end{cases}
\end{aligned}
\end{equation*}
, $\epsilon$ is the runlist update time (Sec.~\ref{sec:kernel_thread_approach}), $R_i$ is the worst-case response time of $\tau_i$, $hp(\tau_i)$ is a set of all the higher-priority tasks than $\tau_i$ in the system, and $J_h=R_h-(C_h+G_h)$ is the release jitter to capture the carry-in effect. 
%\red{(In $x_i$, it needs to be ``or'' to match with the proof's explanation. And I think $\eta_h^g>0$ needs to be added because Section V-A explains that the kernel thread checks only for tasks with active TSGs.)}
\end{lemma}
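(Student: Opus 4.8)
The plan is to bound the total runlist-update delay suffered by a single job of $\tau_i$ by counting every runlist update that can possibly stall that job and charging each one its worst-case cost $\epsilon$. First I would argue that a job of $\tau_i$ is affected by kernel-thread runlist updates only if $\tau_i$ either uses the GPU itself or shares the designated CPU core with the kernel thread; in every other case the kernel thread neither contends for $\tau_i$'s core nor touches any TSG relevant to $\tau_i$, which justifies the indicator $x_i$. So from here on I assume $x_i=1$.

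Next I would classify the updates. The ``self-inflicted'' updates are those triggered by $\tau_i$'s own state changes: when $\tau_i$ becomes ready and the kernel thread must install $\tau_i$'s TSGs (and/or evict lower-priority ones), and the symmetric update when $\tau_i$ finishes — this accounts for the standalone $2\epsilon$ term. The ``interference'' updates are those triggered by each higher-priority GPU-using task $\tau_h \in hp(\tau_i) \wedge \eta_h^g > 0$: each job of such a $\tau_h$ causes two state transitions seen by the kernel thread (one when it starts and preempts $\tau_i$'s GPU use, one when it completes and the runlist is restored to give the GPU back), each costing $\epsilon$, hence $2\epsilon$ per job of $\tau_h$. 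I would then bound the number of jobs of $\tau_h$ that can interfere within $\tau_i$'s response-time window by the standard fixed-priority carry-in argument: over a window of length $R_i$ there are at most $\lceil (R_i + J_h)/T_h \rceil$ releases of $\tau_h$, where $J_h = R_h - (C_h + G_h)$ captures the jitter/carry-in of a job released just before the window opens. Multiplying by $2\epsilon$ and summing over all such $\tau_h$, and adding the self-inflicted $2\epsilon$, gives exactly Eq.~\eqref{eq:kernel_thread_delay}.

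I would make sure to note two subtleties in the accounting. First, non-GPU higher-priority tasks and lower-priority tasks contribute nothing: a lower-priority GPU task's state change never causes the kernel thread to evict $\tau_h = \tau_i$'s TSGs once $\tau_i$ is the highest-priority ready GPU task (line~\ref{alg1:get_tau_h} always returns $\tau_i$ or higher), and a non-GPU task has no TSG to add or remove. Second, each runlist update is charged at most $\epsilon$ regardless of how many TSGs are added or removed, because the driver performs the whole runlist reconstruction under one critical section, so a single $\epsilon$ per triggering event is a valid upper bound; this is consistent with the overhead model introduced in Sec.~\ref{sec:kernel_thread_approach}.

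The main obstacle I anticipate is the circularity inherent in the bound: $K_i$ is expressed in terms of $R_i$ (and each $R_h$ in terms of its own response time), so the lemma is only meaningful as one component of a fixed-point response-time iteration. I would therefore state that $R_i$ here denotes the value obtained from the response-time recurrence of the following subsection and that Eq.~\eqref{eq:kernel_thread_delay} is to be evaluated within that iteration, monotonicity of the right-hand side in $R_i$ guaranteeing convergence. A secondary subtlety is justifying that two — and not more — kernel-thread updates per interfering $\tau_h$ job suffice even when $\tau_h$'s GPU segment is itself preempted by a yet-higher-priority task; I would argue that such nested preemptions are already counted against the higher-priority task's own term, so no double counting or undercounting occurs, and the per-job count of $2\epsilon$ for $\tau_h$ remains a correct upper bound on the updates attributable specifically to $\tau_h$'s start and completion as seen from $\tau_i$'s perspective.
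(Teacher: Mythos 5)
Your treatment of the indicator $x_i$ and of the higher-priority interference term matches the paper's proof: two runlist updates ($2\epsilon$) per job of each GPU-using $\tau_h \in hp(\tau_i)$, one at its start and one at its completion, with the number of such jobs in the window bounded by $\lceil (R_i+J_h)/T_h \rceil$ via the standard jitter/carry-in argument. That part is fine.

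The genuine gap is in your justification of the standalone $2\epsilon$ term. You attribute it to two ``self-inflicted'' updates, one when $\tau_i$ starts and a symmetric one when $\tau_i$ finishes. The completion-time update, however, is triggered by $\tau_i$'s transition out of the running state, i.e., it occurs only after the job has already completed, so it cannot lengthen that job's response time and should not be charged to it. Meanwhile you explicitly assert that lower-priority tasks contribute nothing, but that is exactly where the second $\epsilon$ in the paper comes from: when $\tau_i$'s job starts, the kernel thread may already be in the middle of a runlist update on behalf of a lower-priority task, and $\tau_i$ must wait out that in-progress update before its own update can be serviced --- a blocking term of up to $\epsilon$. Your accounting therefore counts one event that causes no delay and omits one that does; the two happen to cancel numerically, so you land on the same formula, but the argument as written does not establish the bound. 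Replacing ``one $\epsilon$ at start, one $\epsilon$ at finish'' with ``one $\epsilon$ for the update triggered by $\tau_i$'s own start, plus one $\epsilon$ of blocking from an already-in-progress update for a lower-priority task'' closes the gap. Your remarks on the fixed-point nature of the recurrence and on charging each update a single $\epsilon$ are reasonable and consistent with the paper, though the paper does not belabor them.
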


\begin{proof}
Whenever the kernel thread updates the runlist, it delays the CPU execution of other tasks on the same core due to its highest priority and also the GPU execution due to TSG evictions and GPU context switching~\cite{capodieci2018deadline}. Hence, GPU-using tasks on any CPU core and CPU-only tasks running on the same core as the kernel thread are subject to this delay. The only exception is CPU-only tasks running on a different core as they are neither delayed by the CPU and GPU operations of the kernel thread (the $x_i$ term). 
%Given that the runlist update is executed by the kernel thread with the highest priority, we only need to consider the CPU preemption delay caused by the kernel thread, and it only affects the tasks running on the same CPU core as the kernel thread (the $x_i$ term).

Once the job of $\tau_i$ starts execution, its status change triggers the kernel thread once to update the runlist. The kernel thread might be already updating the runlist for a lower-priority task, so one additional $\epsilon$ needs to be considered  (the first term in the parenthesis).
%For a task $\tau_i$ assigned to the same core as the kernel thread, the first term in the parenthesis captures the delay caused by the presence of the kernel thread itself, which delays the execution of $\tau_i$ once $\tau_i$'s status change is detected.
During $\tau_i$'s job execution ($R_i$), additional invocation of the kernel thread is determined by only higher-priority jobs since those with lower priority than $\tau_i$ cannot trigger the runlist update until the completion of the $\tau_i$'s job. The number of arrivals of high-priority tasks during $R_i$ is upper-bounded by $\lceil \frac{R_i+J_h}{T_h} \rceil$, where adding $J_h$ is a known method to capture a carry-in job in an arbitrary interval~\cite{bertogna2008schedulability}. Each high-priority job involves two times of runlist updates ($2\epsilon$), one at the beginning of the high-priority job and another at its completion to resume $\tau_i$'s job.
%The second term captures the delay caused by the preemption of any higher-priority tasks. Specifically, $\tau_i$ can be delayed at most twice at the boundaries of the arrival and completion of a higher-priority task $\tau_h$. This delay occurs because the kernel thread must update the runlists before $\tau_i$'s GPU segments can be scheduled to resume. 
\end{proof}

Fig.~\ref{fig:kthread_analysis} gives an example of all types of delay caused by the kernel thread. 
%Tasks $\tau_1$, $\tau_3$, and the kernel thread are running on Core 1, and $\tau_2$ is running on Core 2. The taskset is with the priority of $\tau_1 > \tau_2 > \tau_3$. 
\ding{192} is the delay that occurs when any task running on the same core as the kernel thread has a state change, i.e., starting job execution. During the remote preemption of $\tau_1$ on $\tau_3$, \ding{193} illustrates the delay caused by the runlist updates before and after $\tau_1$'s completion. At last, the delay caused by a local preemption from $\tau_2$ on $\tau_3$ is depicted by \ding{194}.

\begin{lemma}\label{lm:baseline_kthread_rt_busy}
Under the kernel thread approach, the worst-case response time of a task $\tau_i$ is upper-bounded by:
\begin{equation}
\small
\begin{aligned}
    R_i &= C_i + G_i + K_i 
        + \sum_{\mathclap{\tau_h \in hpp(\tau_i)}} \quad \lceil \frac{R_i}{T_h} \rceil (C_h + G_h) \\[-2pt]
        &+ \sum_{\mathclap{\substack{\tau_h \in hp(\tau_i)\land \\  \tau_h \notin hpp(\tau_i)\land \eta_h^g>0 }}} \; \lceil \frac{R_i+J_h}{T_h} \rceil (C_h + G_h)
\end{aligned}
\label{eq:baseline_kthread_rt_busy}
\end{equation}
%where $\zeta_i$ is the delay for pixel-level/thread-block-level GPU preemption discussed in Sec.~\ref{sec:kernel_thread_approach}, and 
where $hpp(\tau_i)$ is the set of higher-priority tasks running on the same CPU core as $\tau_i$.
%when the runlist update may not be able to complete immediately due to the ongoing execution of GPU commands. We consider the length to be a maximum length of a single thread block among all the kernels, and it can be either estimated or measured.
\end{lemma}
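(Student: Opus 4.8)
The plan is to bound the response time of an arbitrary job of $\tau_i$ by a classical busy-window argument: fix one job, account for every instant of its response window during which the job makes no progress, charge that instant to one of three well-defined sources of delay, and then close the argument with a monotone fixed-point iteration.

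First I would fix a job $J_i$ of $\tau_i$ released at $t_r$ and completing at $t_f$, and write $R^\star = t_f - t_r$. Over the window $[t_r,t_f)$ the job must accrue exactly $C_i + G_i$ units of its own execution (CPU segments plus GPU segments, the latter with the core busy-waiting throughout, since self-suspension is disabled under the kernel-thread approach). Because busy-waiting keeps $\tau_i$'s core occupied whenever $J_i$ has pending work, no lower-priority task on that core can ever delay $J_i$; and because the kernel thread leaves only the highest-priority GPU-using ready task's TSGs on the runlist, no lower-priority GPU segment can delay $J_i$ on the GPU. Hence every instant of $[t_r,t_f)$ in which $J_i$ does not progress falls into exactly one of: (i) a higher-priority task on $\tau_i$'s core is running one of its CPU or GPU segments (a CPU segment preempts $J_i$ directly; a GPU segment occupies the shared core via busy-waiting and also preempts $J_i$ on the GPU); (ii) a higher-priority GPU-using task on another core is active and holds the GPU ahead of $J_i$'s GPU segments; or (iii) the kernel thread is performing a runlist update on behalf of some task.

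Next I would bound the three sources. Source (iii) is precisely $K_i$ from Lemma~\ref{lm:kernel_thread_delay}. For (i) I use the standard critical-instant alignment in which every higher-priority task sharing $\tau_i$'s core is released together with $J_i$; then each $\tau_h \in hpp(\tau_i)$ releases at most $\lceil R^\star/T_h\rceil$ times inside the window, and each release delays $J_i$ by at most its full $C_h + G_h$ (both its CPU and its GPU segments count, as argued above), with no jitter term needed since these tasks can be aligned to $t_r$. For (ii), a higher-priority task $\tau_h$ on a different core can delay $J_i$ only via its GPU segments and only if $\eta_h^g>0$; conservatively I charge its whole $C_h + G_h$ per release, and since such a task's activity is not synchronized with $t_r$, I bound the number of contributing releases by $\lceil (R^\star+J_h)/T_h\rceil$, with $J_h = R_h-(C_h+G_h)$ capturing a carried-in job as in~\cite{bertogna2008schedulability}. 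Higher-priority CPU-only tasks on other cores share neither the core nor the GPU with $J_i$ and are correctly excluded from the second sum. Adding these bounds to $C_i + G_i$ gives $R^\star \le C_i + G_i + K_i + \sum_{\tau_h\in hpp(\tau_i)}\lceil R^\star/T_h\rceil(C_h+G_h) + \sum_{\tau_h\in hp(\tau_i)\setminus hpp(\tau_i),\,\eta_h^g>0}\lceil (R^\star+J_h)/T_h\rceil(C_h+G_h)$. The right-hand side is non-decreasing in the window length (recall $K_i$ is itself non-decreasing in its $R_i$-argument), so the usual iteration started from $C_i + G_i$ converges to the least fixed point $R_i$ of Eq.~\eqref{eq:baseline_kthread_rt_busy}, which therefore upper-bounds $R^\star$ for every job; if the iteration grows past $D_i$ the task is declared unschedulable.

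The main obstacle is making the case split into (i)--(iii) airtight: one must show that $J_i$ never idles "for free", i.e.\ whenever it has pending CPU work it is either running or preempted on its core, and whenever it has pending GPU work it is either being served by the GPU or preempted by a strictly higher-priority GPU segment. This rests on the work-conserving behavior of the fixed-priority CPU scheduler, on the busy-waiting restriction that keeps $\tau_i$'s core occupied during pure-GPU execution, and on the kernel thread only ever installing the highest-priority GPU-using ready task's TSGs. One also has to verify that there is no double counting between $K_i$ and sources (i)/(ii); this is clean because $K_i$ counts only the $\epsilon$-length update actions themselves, which are time-disjoint from the actual CPU and GPU execution of the interfering jobs.
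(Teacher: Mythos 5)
Your proof is correct and follows essentially the same route as the paper's: it justifies the recurrence term by term, using $K_i$ from Lemma~\ref{lm:kernel_thread_delay} for runlist-update delay, charging same-core higher-priority tasks their full $C_h+G_h$ with no jitter because busy-waiting keeps them occupying the core for their entire job, and charging remote GPU-using higher-priority tasks with the carry-in jitter $J_h$. The extra scaffolding you add (the instant-by-instant case split, the no-double-counting check against $K_i$, and the monotone fixed-point iteration) is left implicit in the paper but is consistent with its argument.
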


\begin{proof}
This is an extension of the conventional iterative response-time test. 
$C_i$ and $G_i$ denote the execution time of the CPU and GPU segments of $\tau_i$. 
$K_i$ bounds the delay from the kernel thread. 
%$\epsilon$ is taken into account because the execution of a task is always following a runlist update by the kernel thread.
%$\zeta_i$ sets an upper bound on the maximum GPU preemption delay experienced by $\tau_i$ when it preempts the GPU execution of a lower-priority task. 
By the design of the kernel thread approach, higher-priority tasks $\tau_h$ on the same CPU core as $\tau_h$ can preempt $\tau_i$ for their entire job execution ($C_h + G_h$) with no jitter effect because $\tau_h$ busy-waits on the CPU (the second line of Eq.~\eqref{eq:baseline_kthread_rt_busy}).
%, and there's no jitter introduced because the preemption is consistent.
Higher-priority GPU-using tasks ($\eta_h^g>0$) on different cores can also effectively preempt $\tau_i$ irrespective of whether $\tau_i$ is a CPU-only task or not due to busy-waiting (the third line). Fig.~\ref{fig:preemption_cpu_task_dft_prio} illustrates an example where $\tau_1$ preempts $\tau_2$'s GPU execution ($G^m$ are omitted for simplicity). As a result, $\tau_2$ remains busy-waiting on the CPU, leading $\tau_3$ to get an additional delay equivalent to the GPU segments of $\tau_1$. Since the release of higher-priority tasks on different cores is not synchronized with $\tau_i$, they can introduce carry-in jobs, captured by $J_h$ in the last term, as in Lemma~\ref{lm:kernel_thread_delay}. 
%The last term bounds the preemption time of task $\tau_i$ from any higher-priority tasks, which may happen during the entire response time of $\tau_i$.
\end{proof}

\begin{figure}[t]
\centering
    \begin{subfigure}[b]{0.48\linewidth}
        \includegraphics[width=\linewidth]{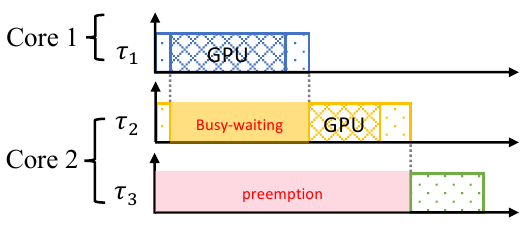}
        \vspace{-1\baselineskip}\caption{CPU priority = GPU priority: $\tau_1 > \tau_2 > \tau_3$}
        \label{fig:preemption_cpu_task_dft_prio}
    \end{subfigure}
    \begin{subfigure}[b]{0.48\linewidth}
        \includegraphics[width=\linewidth]{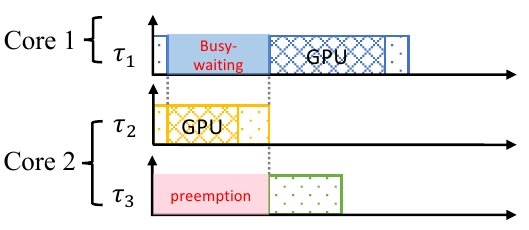}
        \vspace{-1\baselineskip}\caption{CPU priority: $\tau_1 > \tau_2 > \tau_3$; GPU priority: $\tau_2 > \tau_1 > \tau_3$}
        \label{fig:preemption_cpu_task_gpu_prio}
    \end{subfigure}
\caption{Preemption by GPU segments on CPU tasks under busy-waiting mode, kernel thread approach as an example}
\end{figure}

\subsubsection{Preemptive GPU under IOCTL-based Approach}

The IOCTL-based approach provides tasks with both busy-waiting and self-suspension options during pure GPU execution. Unlike the kernel thread approach, each individual task executes the runlist update while holding a lock in the GPU driver for the duration of $\epsilon$ time units. 
%If a runlist update is required, we consider the duration the same as $\epsilon$. 
For the analysis, the primary differences compared to the kernel thread approach are:
\begin{itemize}
    \item \textit{Runlist update execution}: %Each GPU segment of a GPU-using task $\tau_i$ may need two runlist update \cyan{as suggested in the Listing~\ref{lst:ioctl_sample} . 
    In the worst case, runlist updates are required both before and after each GPU segment of $\tau_i$, since the associated TSGs need to be added and removed by IOCTL calls as shown in Listing~\ref{lst:ioctl_sample}. This leads to a cumulative cost of $2\epsilon \cdot \eta^g_i$ for the entire job of $\tau_i$.
    \item \textit{Blocking time}: A task $\tau_i$ can experience blocking of $\epsilon$ due to an ongoing runlist update initiated by a lower-priority task.
\end{itemize}
Other factors contributing to the response time under the IOCTL-based approach, such as the execution requirements, i.e., $C_i$ and $G_i$, and possible preemption scenarios, remain consistent with the kernel thread approach. 
For ease of presentation, we define $G_i^*$, $G_i^{e*}$, and $G_i^{m*}$ to incorporate the two times of runlist updates into the execution requirements.
\begin{equation*}
\small
\begin{aligned}
    G^*_i=G_i + 2\epsilon \cdot \eta^g_i\; \text{,}\; G_i^{e*} = G^e_i+2\epsilon\cdot\eta^g_i     
\;\text{and}\; G_i^{m*} = G^m_i+2\epsilon\cdot\eta^g_i
\end{aligned}
\end{equation*}

Before we proceed to analyze the response time, Fig.~\ref{fig:ioctl_delay} can help better understand all types of runlist update delay under the IOCTL-based approach. The task of interest here is $\tau_2$ which runs with medium priority.
The runlist update by $\tau_3$ slightly before $\tau_2$'s release as well as $\tau_1$'s GPU segment causes blocking to $\tau_2$ at \ding{192} and $\tau_1$ at \ding{193}.  
%Then, during $\tau_2$'s request phase before the start of its GPU segment, no actual runlist update is triggered as $\tau_1$ is actively using the GPU and it has a higher priority than $\tau_2$. 
Until $\tau_1$ finishes GPU execution,
$\tau_2$ cannot start its GPU segment as $\tau_1$ is actively using the GPU with higher priority than $\tau_2$. Then, the start time of $\tau_2$'s GPU segment is further delayed by the runlist update at \ding{194}, right after $\tau_1$'s GPU segment, which is to remove $\tau_1$'s TSG from the runlist. %Due to this preemption, $\tau_2$ is delayed as indicated by \ding{194}.
Based on these observations, we derive the response time analysis of busy-waiting and self-suspending GPU tasks as follows.

\begin{lemma}\label{lm:baseline_ioctl_rt_busy}
Under the IOCTL-based approach with busy-waiting, the worst-case response time of $\tau_i$ is bounded by:
\begin{equation}
\small
\begin{aligned}
    % R_i &= C_i + G_i + B_i \\
    %     &+ \sum_{\tau_h \in hpp(\tau_i)} \lceil \frac{R_i}{T_h} \rceil (C_h + G_h) \\
    %     &+ \sum_{\substack{\eta_h^g>0 \land \tau_h \in hp(\tau_i) \\ \land \tau_h \notin hpp(\tau_i)}} \lceil \frac{R_i + J^g_h}{T_h} \rceil G^e_h
    R_i &= C_i + G^*_i + (\eta^g_i+1)\cdot\epsilon \\[-2pt]
        &+ \sum_{\substack{\tau_h \in hpp(\tau_i) \\\land \eta^g_h=0}} \lceil \frac{R_i}{T_h} \rceil \cdot C_h + \sum_{\substack{\tau_h \in hpp(\tau_i) \\\land \eta^g_h>0}} \lceil \frac{R_i}{T_h} \rceil \cdot (C_h + G^*_h) \\[-3pt]
        &+ \sum_{\mathclap{\substack{\tau_h\in hp(\tau_i) \land\\ \tau_h\notin hpp(\tau_i)\land\eta^g_h>0 }}} \quad \lceil \frac{R_i+J^g_h}{T_h} \rceil \cdot G_h^{e*}
\end{aligned}
\label{eq:baseline_ioctl_rt_busy}
\end{equation}
where 
%$${G^e}^*_h = G^e_h+2\epsilon\cdot\eta^g_h$, $G^*_h=G_h + 2\epsilon \cdot \eta^g_h$, and
$J^g_h=R_h-G^e_h$.
\end{lemma}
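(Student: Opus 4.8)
The plan is to mirror the structure of the proof of Lemma~\ref{lm:baseline_kthread_rt_busy} and adjust the terms to reflect the three structural differences already identified in the text: runlist updates now bracket each GPU \emph{segment} rather than each \emph{job}, a lower-priority task can block $\tau_i$ for one $\epsilon$ while holding the \texttt{rt\_mutex}, and higher-priority GPU-using tasks on remote cores no longer force $\tau_i$ to busy-wait through their CPU segments (so only $G_h^{e*}$, not $C_h+G_h$, is charged for them). First I would establish the self-interference term $C_i + G^*_i + (\eta^g_i+1)\cdot\epsilon$: here $G^*_i = G_i + 2\epsilon\cdot\eta^g_i$ already folds in the two runlist updates surrounding each of the $\eta^g_i$ GPU segments of $\tau_i$ (one at \texttt{cudaStreamBegin()}, one at \texttt{cudaStreamEnd()}), leaving the extra $(\eta^g_i+1)\cdot\epsilon$ to be justified. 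I would account for it as $\eta^g_i$ blocking terms (one per GPU segment of $\tau_i$, each arising because a lower-priority task may be mid-update on the \texttt{rt\_mutex} when $\tau_i$ tries to begin or resume a GPU segment) plus one additional $\epsilon$ for the blocking that can be incurred at job release (the \ding{192} scenario in Fig.~\ref{fig:ioctl_delay}, where $\tau_3$'s update straddles $\tau_2$'s release). Because \texttt{rt\_mutex} provides priority inheritance, at most one lower-priority update can be in flight per acquisition, so each blocking event is bounded by a single $\epsilon$.

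Next I would treat the local higher-priority interference, i.e.\ tasks $\tau_h \in hpp(\tau_i)$ on the same core. These split into two cases according to $\eta^g_h$. CPU-only higher-priority tasks ($\eta^g_h=0$) contribute only their CPU execution $C_h$ per arrival, since under the IOCTL approach a CPU-only task neither touches the GPU nor busy-waits; this gives the $\sum \lceil R_i/T_h\rceil\cdot C_h$ term. GPU-using higher-priority tasks on the same core ($\eta^g_h>0$) still preempt $\tau_i$ on the CPU for their CPU segments and, being co-located, their entire job including GPU portion must be charged: $\lceil R_i/T_h\rceil\cdot(C_h+G^*_h)$, where the starred quantity again bundles in their own runlist-update overhead. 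No jitter term is needed for these because, as in the kernel-thread case, the critical instant can be taken with all same-core higher-priority tasks released synchronously with $\tau_i$.

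The subtler term is the last one: higher-priority GPU-using tasks on \emph{different} cores, $\tau_h\in hp(\tau_i)\setminus hpp(\tau_i)$ with $\eta^g_h>0$. These delay $\tau_i$ only through GPU preemption — each time such a $\tau_h$ holds the GPU, $\tau_i$'s GPU segment is evicted or stalled — and under self-suspension-capable IOCTL semantics $\tau_i$ is held up for the duration of $\tau_h$'s pure GPU work plus the update overhead, i.e.\ $G_h^{e*} = G^e_h + 2\epsilon\cdot\eta^g_h$, rather than for $\tau_h$'s whole job as in the busy-waiting kernel-thread analysis. The count of such interfering arrivals over $R_i$ must include a carry-in job, captured by the jitter $J^g_h = R_h - G^e_h$; I would justify this choice by noting that the relevant ``active'' window of $\tau_h$ from $\tau_i$'s perspective is precisely its GPU-occupancy window $G^e_h$, so the portion of $\tau_h$'s response time outside that window, $R_h - G^e_h$, is exactly the slack that can push a carry-in instance into the interval, giving the $\lceil (R_i+J^g_h)/T_h\rceil$ bound by the standard carry-in argument of~\cite{bertogna2008schedulability}. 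Summing all four contributions and invoking the usual monotone fixed-point iteration yields the stated recurrence.

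The main obstacle I anticipate is pinning down the $(\eta^g_i+1)\cdot\epsilon$ blocking term rigorously: one must argue that across the whole job of $\tau_i$ there are at most $\eta^g_i$ points at which a lower-priority task's update can block $\tau_i$ (namely the boundaries of $\tau_i$'s GPU segments where $\tau_i$ itself must grab the \texttt{rt\_mutex}), that each such block costs at most one $\epsilon$ thanks to priority inheritance preventing a chain of lower-priority updates, and that the single extra $\epsilon$ at release is not already double-counted inside $G^*_i$ or inside any higher-priority term. A careful case analysis on where in $\tau_i$'s execution the \texttt{rt\_mutex} contention can arise, combined with the observation that once $\tau_i$ is the highest-priority GPU requester no lower-priority task can seize the mutex ahead of it more than once per segment boundary, should close this gap.
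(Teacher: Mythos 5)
Your proposal follows essentially the same route as the paper's proof: the same term-by-term accounting of self-execution plus per-segment runlist updates, the $(\eta^g_i+1)\cdot\epsilon$ blocking split into one $\epsilon$ at release plus one per GPU segment, synchronous-release local preemption with $C_h$ versus $C_h+G^*_h$, and a jitter-augmented count of remote GPU-using tasks charged $G_h^{e*}$ each.

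One part of your justification does not cover the full generality of the last term, though. You explain the remote-core contribution as arising ``only through GPU preemption'' of $\tau_i$'s own GPU segment, and you attribute the use of $G_h^{e*}$ (rather than the whole job) to ``self-suspension-capable IOCTL semantics'' --- but this is the \emph{busy-waiting} lemma, and its last term carries no $\eta^g_i>0$ condition (unlike the corresponding terms in the self-suspension version, Lemma~\ref{lm:baseline_ioctl_rt_suspend}). The paper's argument for why the remote term applies even to a CPU-only $\tau_i$ is that, under busy-waiting, a same-core higher-priority GPU-using task spins on the CPU while its GPU work is preempted by a remote $\tau_h$, so the remote GPU interference propagates onto $\tau_i$'s core (the scenario of Fig.~\ref{fig:preemption_cpu_task_dft_prio}). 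Your mechanism would only justify the term for GPU-using $\tau_i$; to close the lemma as stated you need this busy-waiting propagation argument. The reason only $G_h^{e*}$ and not $C_h+G_h^{e*}$ is charged for remote tasks is simply that their CPU segments run on a different core, not anything to do with suspension.
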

\begin{proof}
The first two terms are self-evident, representing the execution requirement and the runlist update triggered by $\tau_i$ itself, respectively.

The total amount of blocking (third term) imposed by lower-priority tasks is bounded by $(\eta^g_i+1)\cdot\epsilon$. First, at least one time of blocking of $\epsilon$ applies to every $\tau_i$, regardless of whether it is a GPU-using or CPU-only task. This is because it can experience blocking from GPU-using lower-priority tasks at the very beginning of its job instance, as illustrated by \ding{192} in Fig.~\ref{fig:ioctl_delay}. If $\tau_i$ is a GPU-using task, each GPU segment requires up to $\epsilon$ for potential blocking from lower-priority tasks, which results in $\eta^g_h\cdot\epsilon$.

The next factor to consider is preemption by higher-priority tasks $\tau_h$. We divide this into (i) those on the same core as $\tau_i$ ($\tau_h\in hpp(\tau_i)$ in the fourth and fifth terms) and (ii) different cores ($\tau_h\in hp(\tau_i)\land \tau_h\notin hpp(\tau_i)$ in the last term). Let us consider case (i) first. If $\tau_h$ is a CPU-only task ($\eta^g_h=0$), it preempts $\tau_i$ for a duration of $C_h$. If $\tau_h$ uses the GPU ($\eta^g_h>0$), it preempts $\tau_i$ for its entire job execution of $C_h+G^*_h$ which includes the two times of runlist updates in the worst case. 

For case (ii), only GPU-using higher-priority tasks ($\eta^g_h>0$) can affect $\tau_i$. Under busy waiting mode, such $\tau_h$ effectively preempts $\tau_i$ even if $\tau_i$ is a CPU-only task, as exemplified in Fig.~\ref{fig:preemption_cpu_task_dft_prio}. The duration of preemption equals $G^{e*}_h$, which includes the pure GPU execution and the runlist update cost of $2\epsilon$ for each of $\tau_h$'s GPU segments. A release jitter of $J^g_h$ has to be considered here to capture the carry-in effect.
\end{proof}
 
\begin{lemma}\label{lm:baseline_ioctl_rt_suspend}
Under the IOCTL-based approach with self-suspension, the worst-case response time of $\tau_i$ is bounded by:
\begin{equation}
\small
\begin{aligned}
    R_i &= C_i + G^*_i +  (\eta^g_i+1)\cdot\epsilon \\[-2pt]
        &+ \sum_{\substack{\tau_h \in hpp(\tau_i) \\\land \eta^g_h=0}} \lceil \frac{R_i}{T_h} \rceil \cdot C_h 
        + \sum_{\mathclap{\substack{\tau_h \in hpp(\tau_i) \\\land \eta^g_h>0}}} \quad \lceil \frac{R_i + J^c_h}{T_h} \rceil \cdot (C_h + G_h^{m*}) \\[-3pt]
        &+ \sum_{\mathclap{\substack{\tau_h\in hpp(\tau_i) \\\land \eta^g_h>0 \land \eta^g_i>0}}} \quad \lceil \frac{R_i+J^g_h}{T_h} \rceil \cdot G^e_h 
        + \sum_{\mathclap{\substack{\tau_h\in hp(\tau_i)\\\land\tau_h\notin hpp(\tau_i) \\\land \eta^g_h>0 \land \eta^g_i>0}}} \quad \lceil \frac{R_i+J^g_h}{T_h} \rceil \cdot G_h^{e*} 
\end{aligned}
\label{eq:baseline_ioctl_rt_suspend}
\end{equation}
where $J^c_h=R_h-(C_h+G^m_h)$.
\end{lemma}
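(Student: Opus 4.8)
The plan is to start from the busy-waiting analysis of Lemma~\ref{lm:baseline_ioctl_rt_busy} and change only the interference terms to reflect that, under self-suspension, a task relinquishes its CPU core for the whole duration of each pure GPU segment $G^e$. First I would note that the three leading terms of Eq.~\eqref{eq:baseline_ioctl_rt_suspend}, i.e.\ $C_i + G^*_i + (\eta^g_i+1)\epsilon$, carry over verbatim from Eq.~\eqref{eq:baseline_ioctl_rt_busy}: $C_i + G^*_i$ is $\tau_i$'s own execution requirement including the $2\epsilon$ runlist updates it issues around each of its GPU segments (Listing~\ref{lst:ioctl_sample}), and $(\eta^g_i+1)\epsilon$ bounds the blocking from lower-priority tasks (one $\epsilon$ at release, as in \ding{192} of Fig.~\ref{fig:ioctl_delay}, plus one $\epsilon$ per GPU segment of $\tau_i$), with exactly the argument of Lemma~\ref{lm:baseline_ioctl_rt_busy} since whether $\tau_i$ spins or suspends does not change which lower-priority runlist update can be in progress when $\tau_i$ needs the mutex.

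Next I would rederive the higher-priority interference, keeping the same same-core/remote split but charging each contribution to the resource it actually occupies. For $\tau_h\in hpp(\tau_i)$ with $\eta^g_h=0$ nothing changes: $\tau_h$ preempts $\tau_i$ on the CPU for $C_h$ with no jitter, giving $\lceil R_i/T_h\rceil C_h$. For $\tau_h\in hpp(\tau_i)$ with $\eta^g_h>0$ the essential change is that while $\tau_h$ is in a pure GPU segment it is suspended and off the CPU; hence on the CPU it delays $\tau_i$ only for $C_h+G^{m*}_h$ (its CPU segments plus the misc-CPU part of its GPU segments, the latter already absorbing $\tau_h$'s own $2\epsilon$ runlist updates), and because the suspension intervals make this CPU demand arrive non-periodically I would inflate the window by the jitter $J^c_h=R_h-(C_h+G^m_h)$, a safe over-approximation of the non-CPU portion of $\tau_h$'s response, following the carry-in argument of~\cite{bertogna2008schedulability} as in Lemma~\ref{lm:baseline_ioctl_rt_busy}; this yields the fifth term. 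Independently, $\tau_h$'s pure GPU segments preempt $\tau_i$'s GPU segments, but only when $\tau_i$ itself uses the GPU ($\eta^g_i>0$), and since $\tau_h$'s runlist-update cost was already charged on the CPU this GPU contribution is the bare $G^e_h$ rather than $G^{e*}_h$ --- the sixth term, with jitter $J^g_h=R_h-G^e_h$.

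Then I would handle remote higher-priority tasks, $\tau_h\in hp(\tau_i)\setminus hpp(\tau_i)$. The qualitative difference from busy-waiting is that a CPU-only $\tau_i$ is no longer affected by a remote GPU-using $\tau_h$: the scenario of Fig.~\ref{fig:preemption_cpu_task_dft_prio}, where a busy-waiting victim stays pinned to the CPU while its GPU segment is remotely preempted, cannot occur once $\tau_i$ suspends instead of spinning. So remote interference appears only when $\eta^g_h>0\land\eta^g_i>0$, and then it delays $\tau_i$ only on the GPU, for $G^{e*}_h$ per release of $\tau_h$ --- here the $2\epsilon$ must be retained because those runlist updates, although issued on $\tau_h$'s core, still serialize through the single global runlist mutex and therefore occupy the GPU timeline that $\tau_i$'s GPU segments contend for --- with jitter $J^g_h$, giving the last term. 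Finally, because $R_i$ occurs on both sides, the bound is evaluated by the usual fixed-point iteration seeded with $R_i \leftarrow C_i+G^*_i$ and declared to fail if it exceeds $D_i$.

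I expect the main obstacle to be the bookkeeping of the $2\epsilon$ runlist-update costs together with the justification of the two jitter quantities. Each of $\tau_h$'s pair of updates must be accounted for exactly once --- on the CPU, inside $G^{m*}_h$, when $\tau_h$ shares $\tau_i$'s core, and on the GPU timeline, inside $G^{e*}_h$, when it does not --- and one has to argue this assignment is sound given that the runlist mutex of Alg.~\ref{alg:ioctl_approach} serializes updates across all cores: a mis-assignment would either double-count (mere pessimism) or, worse, drop a delay (unsoundness). Equally delicate is showing that $J^c_h=R_h-(C_h+G^m_h)$ and $J^g_h=R_h-G^e_h$ are genuine upper bounds on the gaps in $\tau_h$'s CPU- and GPU-arrival patterns, including the corner case where $\tau_h$ and $\tau_i$ are both GPU-using and both suspend, so that the sixth and seventh terms do not silently omit GPU contention that occurs while $\tau_i$ is itself suspended. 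Everything else follows the template of Lemma~\ref{lm:baseline_ioctl_rt_busy}.
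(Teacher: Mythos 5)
Your proposal takes essentially the same route as the paper's proof: it treats the bound as a variant of the busy-waiting lemma, splits higher-priority interference into same-core CPU-only, same-core GPU-using (CPU demand $C_h+G_h^{m*}$ with jitter $J^c_h$, GPU demand $G^e_h$ with jitter $J^g_h$, conditioned on $\eta^g_i>0$), and remote GPU-using ($G_h^{e*}$ with jitter $J^g_h$), and justifies charging only $G^e_h$ rather than $G_h^{e*}$ on the same-core GPU term by the same overlap/no-double-counting argument the paper uses. Your extra remarks on the $2\epsilon$ bookkeeping and on why a suspending $\tau_i$ escapes the Fig.~\ref{fig:preemption_cpu_task_dft_prio} scenario are more explicit than the paper's one-paragraph proof, but the underlying argument is identical.
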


\begin{proof}
This is a variant of the analysis for busy-waiting tasks given by Lemma~\ref{lm:baseline_ioctl_rt_busy}. The main difference lies in the last three terms accounting for interference from GPU-using higher-priority tasks $\tau_h$. First, $\tau_h$ on the same core as $\tau_i$. Such $\tau_h$ can interfere with $\tau_i$ on both CPU and GPU. On the CPU side (third last term), each job of $\tau_h$ imposes a delay of up to $C_h+G_h^{m*}$ on $\tau_i$ and the self-suspending behavior of $\tau_h$ introduces a jitter effect, $J^c_h$, as reported in~\cite{Bletsas2018}. On the GPU side (second last term), $\tau_h$ imposes interference of its pure GPU execution, $G_h^{e*}$, to $\tau_i$ only when $\tau_i$ uses the GPU ($\eta_i^g>0$). Here, from the perspective of $\tau_i$ with $\eta_i^g>0$, the runlist update delay on the CPU and GPU overlaps, and thus using $G_h^e$ safely bounds GPU preemption from $\tau_h$. Similar to Lemma~\ref{lm:baseline_ioctl_rt_busy}, the carry-in effect can be accounted for by $J^g_h$.

Second, $\tau_h$ on different cores (last term). Such $\tau_h$ interferes with the GPU execution of $\tau_i$ only on the GPU side, by up to $G_h^{e*}$ per its job. By capturing all these cases, the lemma bounds the response time.
\end{proof}

\subsection{Analysis for GPU Priority Assignment}
\label{sec:analysis_gpu_prio}
When the GPU priority assignment given in Sec.~\ref{sec:priority_assignment} is used, the amount of preemption due to higher-priority GPU tasks, i.e., $hpp()$ and $hp()$, needs to be revisited. 
%This can be easily analyzed by introducing a small modification to our analysis shown above. 
Recall that our assignment preserves the relative priority order of GPU segments the same as the CPU priority order for tasks on the same core. The meaning of $hpp()$ therefore remains unchanged. However, $hp()$ needs to be redefined such that it means the set of tasks with higher ``GPU segment'' priorities in the system. This is because any GPU preemptions in the last terms of Eq.~\eqref{eq:baseline_kthread_rt_busy}, \eqref{eq:baseline_ioctl_rt_busy}, \eqref{eq:baseline_ioctl_rt_suspend},  and the runlist updates in Eq.~\eqref{eq:kernel_thread_delay} are now governed by GPU segment priorities. When computing the release jitter $J^x_h$, $R_h$ needs to be replaced with $D_h$ since the WCRT of higher-priority tasks is unknown when applying our GPU priority assignment method. With these simple modifications, our analysis in the previous section can analyze the effect of the GPU priority assignment.

The use of the GPU segment priority assignment is particularly effective in mitigating the scheduling inefficiency of busy-waiting depicted in Fig.~\ref{fig:preemption_cpu_task_dft_prio}. To illustrate, consider Fig.~\ref{fig:preemption_cpu_task_gpu_prio} where $\tau_2$ is assigned a higher GPU priority than $\tau_1$. Consequently, $\tau_3$ no longer experiences the delay from $\tau_1$'s GPU segment, thereby achieving a shorter response time. Our evaluation results in Sec.~\ref{sec:schedulability_experiments} will confirm this claim. 

% If tasks are in busy-waiting mode, we can derive the upper-bound of the worst-case response time of a task $\tau_i$:
% \begin{equation}
% \begin{aligned}
%     R_i &= C_i + G_i + B^c_i + \eta^g_i \cdot \epsilon + \zeta_i \\
%         &+ \sum_{\tau_h \in hpp(\tau_i)} \lceil \frac{R_i}{T_h} \rceil (C_h + G_h + \eta^g_h \cdot \epsilon) \\
%         &+ \sum_{\substack{\eta_i^g >0 \land \tau_h \in hp(\tau_i) \\\land \tau_h \notin hpp(\tau_i)}} \lceil \frac{R_i + J^g_h}{T_h} \rceil G^e_h
% \end{aligned}
% \label{eq:baseline_ioctl_rt_busy}
% \end{equation}

\subsection{Analysis with Reduced Pessimism}
\label{sec:analysis_improved}
In the above analysis, the total preemption time on task $\tau_i$ caused by higher-priority tasks running on the same core is simply computed by adding up the worst-case preemption time on both CPU and GPU, assuming both types of preemptions occur throughout execution. However, there are two key factors that make this analysis more conservative than necessary:
\begin{itemize}
    \item Both CPU and GPU preemptions are assumed to happen at their full extent.
    \item Under self-suspension mode, a task $\tau_i$ cannot experience full preemption of all CPU and GPU segments of a higher-priority task $\tau_h$ running on the same core.
\end{itemize}

The first factor can be easily observed in Eqs.~\ref{eq:baseline_kthread_rt_busy},~\ref{eq:baseline_ioctl_rt_suspend} and~\ref{eq:baseline_ioctl_rt_busy} as the interval of interest of $R_i$ is always considered when computing the number of local and remote preemptions. Such pessimism is illustrated in Fig.~\ref{fig:overlap2}. For \ding{192} and \ding{193}, the GPU execution of $\tau_2$ should have a chance to execute together with CPU segments of $\tau_1$, but in the baseline analysis, $\tau_1$'s CPU segments are assumed to preempt $\tau_2$'s GPU execution. 
%Similar things happen in \ding{193} where $\tau_2$'s kernel execution could execute together with $\tau_1$'s data copy, but preemptions are assumed to occur in this situation as well. 
Fig.~\ref{fig:overlap1} shows the case of an actual schedule where both preemptions of \ding{192} and \ding{193} do not exist.

For the second factor, for convenience, let us consider two tasks $\tau_h$ and $\tau_l$ released at the same time on the same CPU core. $\tau_h$ finishes its CPU segment first and starts GPU execution. At this time, $\tau_l$ can begin its CPU segment and there is an inevitable overlap between $\tau_h$'s GPU segment and $\tau_l$'s CPU segment, making full preemption impossible.

The baseline analysis given in Sec.~\ref{section:sched_analysis:baseline_analysis} overestimates the worst-case response time, especially for the IOCTL-based approach that allows concurrent execution of GPU kernels and CPU segments from different tasks. We focus on reducing the pessimism by identifying minimum possible overlaps between segments to shrink the worst-case response time in the recursive analysis form. We establish key definitions used throughout the discussion.

\begin{figure}[t]
\vspace{-10pt}
    \centering
    \begin{subfigure}[t]{\linewidth}
    \centering
        \includegraphics[width=0.6\linewidth]{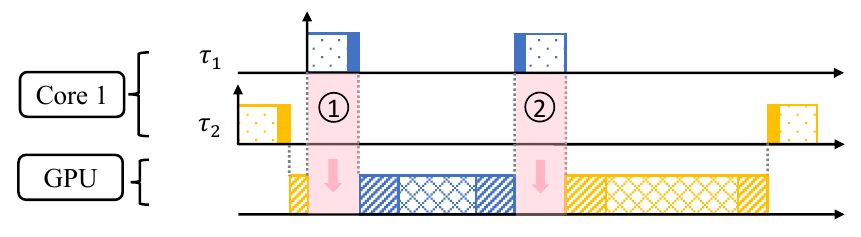}
        \vspace{-0.5\baselineskip}\caption{Assumed pessimistic schedule by baseline analysis. Unreal preemptions are labelled as \ding{192} and \ding{193}.}
        \label{fig:overlap2}
    \end{subfigure}
    \begin{subfigure}[t]{\linewidth}
    \centering
        \includegraphics[width=0.6\linewidth]{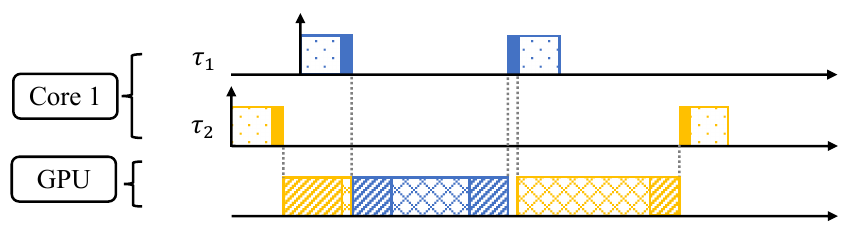}
        \vspace{-0.5\baselineskip}\caption{Actual schedule with overlapped execution}
        \label{fig:overlap1}
    \end{subfigure}
\caption{Pessimism of baseline analysis and two types of execution overlap (IOCTL-based approach; priority: $\tau_1 > \tau_2$)}
\end{figure}

\begin{definition}[Completion time] The completion time, $X$, of a segment is defined as the time interval between the start of execution of the segment and the completion of the segment. 
\label{def:computation_time}
\end{definition}

\begin{definition}[Full overlap] A group of execution segments $e_1$ is said to have a full overlap with another segment $e_2$ if all segments of $e_1$ are entirely contained within $e_2$. This means that the start time $s_1$ of the first segment of $e_1$ is after or equal to the start time $s_2$ of $e_2$ ($s_1 \ge s_2$), and the completion time $c_1$ of the last segment of $e_1$ is before or equal to the completion time $c_2$ of $e_2$ ($c_1 \le c_2$). The notation $e_1 \sqsubset e_2$ denotes that $e_1$ fully overlaps with $e_2$. 
\label{def:full_overlap}
\end{definition}
% Computation time takes into account any preemption and blocking to provide a measure of the overall processing time required to complete a given set of computations. 

Based on these, we consider two cases for task $\tau_i$ under analysis: (i) all CPU segments of a higher-priority task $\tau_h$ fully overlap with the $j$-th pure GPU segment of $\tau_i$, i.e., $C_h \sqsubset G_{i,j}^e$, and (ii) all pure GPU segments of $\tau_h$ fully overlap with the $j$-th CPU segment of $\tau_i$, i.e., $G_h^e \sqsubset C_{i,j}$. If the system uses Rate Monotonic or Deadline Monotonic for priority assignment,  lower-priority tasks would tend to have longer periods/deadlines and execution time than those of higher-priority tasks, and finding out the minimum overlapped interval for these cases can yield nontrivial improvements.

\begin{lemma}
The minimum fully overlapped CPU execution of $\tau_h$ with the $j$-th pure GPU segment of $\tau_i$, i.e., $C_h \sqsubset G_{i,j}^e$, is lower-bounded by: 
\begin{equation}
\small
\begin{aligned}
    O^{cg}_{(i,j),h} = \max( (\lfloor \frac{BX^{g}_{i,j}}{T_h} \rfloor - 1) \cdot \widecheck{C_h}, 0)
\end{aligned}
\label{eq:overlap_cg_segment}
\end{equation}
where $BX^{g}_{i,j}$ is the best-case relative completion time of $\tau_i$'s $j$-th pure GPU segment and $\widecheck{C_h}$ is the best-case execution time of all CPU segments of $\tau_h$. $BX^{g}_{i,j}$ is given by:
\begin{equation}
\small
\begin{aligned}
   BX^{g}_{i,j} &= \widecheck{G^e_{i,j}}
                    + \sum_{\tau_h \in hp(\tau_i)}(\lceil \frac{BX^{g}_{i,j}}{T_h} \rceil - 1) \cdot \widecheck{G^e_h}
\end{aligned}
\label{eq:overlap_g_bx}
\end{equation}
where the initial condition for recurrence is $BX^{g}_{i,j}=G_{i,j}^e$, and $\widecheck{G^e_{i,j}}$ is the best-case execution time of the $j$-th pure GPU segment of $\tau_i$.
\label{lemma:overlap_gc}
\end{lemma}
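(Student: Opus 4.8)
The plan is to prove this in two stages. First I would lower-bound the wall-clock span of $\tau_i$'s $j$-th pure GPU segment --- its completion time in the sense of Definition~\ref{def:computation_time} --- in the scenario that drives the worst-case response time of $\tau_i$, showing the span is at least $BX^g_{i,j}$ as given by the fixed point~\eqref{eq:overlap_g_bx}. Then I would use an interval-packing argument to show that a GPU segment occupying a window that long forces at least $\lfloor BX^g_{i,j}/T_h\rfloor-1$ jobs of $\tau_h$ to run \emph{all} of their CPU segments inside that window, each contributing at least $\widecheck{C_h}$, which yields~\eqref{eq:overlap_cg_segment} after clamping at zero.

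For the first stage, the key observation is that $\tau_i$ has the lowest GPU-segment priority among $\{\tau_i\}\cup hp(\tau_i)$: under the default setting because GPU priority equals CPU priority, and under the assignment of Sec.~\ref{sec:priority_assignment} because $hp(\tau_i)$ is redefined in terms of GPU-segment priorities. Hence, once $G^e_{i,j}$ begins executing on the device, every job of a higher-priority task released while $G^e_{i,j}$ is still pending must complete \emph{all} of its pure GPU work before $G^e_{i,j}$ does --- its GPU segments preempt $\tau_i$'s on the device and finish ahead of it --- and, since $G^e_{i,j}$ can only have started once the GPU was free, no such job's GPU work straddles the start of the window either. The span of $G^e_{i,j}$ therefore contains $\tau_i$'s own pure GPU execution (equal to its WCET $G^e_{i,j}$ in the response-time worst case, hence the base case of the recurrence) plus the pure GPU work of all higher-priority jobs released during the span; a window of length $L$ receives at least $\lceil L/T_h\rceil-1$ such jobs of $\tau_h$, each contributing at least $\widecheck{G^e_h}$, the ``$-1$'' conservatively absorbing the phasing of $\tau_h$'s arrivals against the window. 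Requiring $L$ to be self-consistent with the work it must contain gives exactly~\eqref{eq:overlap_g_bx}; this is a standard monotone fixed-point iteration --- non-decreasing right-hand side, started from $G^e_{i,j}$ --- that converges whenever $\sum_{\tau_h\in hp(\tau_i)}\widecheck{G^e_h}/T_h<1$, which holds for the task sets to which the test is applied.

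For the second stage, fix the interval $[s,e]$ occupied by $G^e_{i,j}$, so $e-s\ge BX^g_{i,j}$. Throughout $[s,e]$ the task $\tau_i$ is either self-suspended or busy-waiting, so on its core the CPU is never withheld from a higher-priority job; every job of $\tau_h\in hpp(\tau_i)$ released in $[s,e]$ executes its CPU segments inside that interval. At least $\lfloor BX^g_{i,j}/T_h\rfloor$ jobs of $\tau_h$ are released in $[s,e]$ under the dense arrival pattern realized in the worst case, and all but at most one of them have their entire CPU workload contained in $[s,e]$ --- the single discarded job conservatively absorbing the boundary effect of a job released too close to $e$ for its last CPU segment to complete by $e$. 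Each retained job contributes at least $\widecheck{C_h}$ of CPU execution fully overlapping $G^e_{i,j}$ in the sense of Definition~\ref{def:full_overlap}, which gives $(\lfloor BX^g_{i,j}/T_h\rfloor-1)\widecheck{C_h}$; the outer $\max(\cdot,0)$ handles the vacuous case of fewer than two such jobs.

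I expect the main obstacle to be not the fixed-point mechanics but making precise the sense in which $BX^g_{i,j}$ bounds the span of $G^e_{i,j}$: it is \emph{not} a lower bound in every execution --- a run with little higher-priority interference has a short GPU segment and little overlap --- only in the execution the response-time recursion treats as worst case. I must therefore argue that subtracting $O^{cg}_{(i,j),h}$ inside that recursion never cancels preemption that a realizable schedule could genuinely inflict on $\tau_i$: whenever the span is too short for the claimed overlap to occur, it is because higher-priority jobs landed during $\tau_i$'s CPU (rather than GPU) phases, in which case the CPU interference charged by the baseline equations is itself genuine and nothing improper has been removed. Pinning down this soundness link, together with the exact justification of the two ``$-1$'' terms, is where the real care is needed.
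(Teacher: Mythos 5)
Your proposal follows essentially the same route as the paper: the paper obtains the recurrence for $BX^{g}_{i,j}$ by directly adopting the best-case completion-time analysis of Bril et al. (which you instead sketch from first principles), and then uses exactly your packing argument---among the $m=\lfloor BX^{g}_{i,j}/T_h\rfloor$ arrivals of $\tau_h$ inside the segment's window, at least $m-1$ jobs execute entirely within it, each contributing at least $\widecheck{C_h}$. The soundness link you flag (that subtracting $O^{cg}_{(i,j),h}$ in the response-time recursion must never cancel preemption a realizable schedule could actually inflict) is not addressed in the paper's proof either, so your outline does not diverge from its argument.
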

% \begin{lemma}
% The minimum fully overlapped execution of a task $\tau_i$'s pure GPU segments with another task $\tau_h$'s CPU execution can be computed by the following equation:
% \begin{equation}
% \begin{aligned}
%     O^{cg}_{i,h} = \sum_{0<j\le\eta^g_i} O^{cg}_{(i,j),h}
% \end{aligned}
% \label{eq:overlap_cg}
% \end{equation}
% where $O^{cg}_{(i,j),h}$ is the fully overlapped execution of $\tau_i$'s $j$-th pure GPU segment with $\tau_h$'s CPU execution, and is lower-bounded by:
% \begin{equation}
% \begin{aligned}
%     O^{cg}_{(i,j),h} = \max( (\lfloor \frac{BX^{g}_{i,j}}{T_h} \rfloor - 1) \cdot \widecheck{C_h}, 0)
% \end{aligned}
% \label{eq:overlap_cg_segment}
% \end{equation}
% where $BX^{cg}_{i,j}$ is the best-case relative completion time of $\tau_i$'s $j$-th pure GPU segment and is given by:
% \begin{equation*}
% \begin{aligned}
%    BX^{g}_{i,j}(0) = WX^{g}_{i,j} \\
%    BX^{g}_{i,j}(k+1) &= \widecheck{G^e_{i,j}}
%                     + \sum_{\tau_h \in hp(\tau_i)}(\lceil \frac{BX^{g}_{i,j}(k)}{T_h} \rceil - 1) \cdot \widecheck{G^e_h}
% \end{aligned}
% \end{equation*}
% where $WX^{g}_{i,j}$ is the worst-case computation time of $\tau_i$'s pure GPU segments.
% \label{lemma:overlap_gc}
% \end{lemma}
\begin{proof}
The best-case completion time $BX^{g}_{i,j}$ of the $j$-th pure GPU segment of $\tau_i$ given in Eq.~\eqref{eq:overlap_g_bx} is directly adopted from \cite{Bril2004}, the detailed proof of which can be found in that paper.
%We adopted the method from~\cite{Bril2004} to compute the best-case completion time $BX^{g}_{i,j}$ for the $j$-th pure GPU segment of a task $\tau_i$, the detailed proof of which can be found in that paper. 
Next, we determine the minimum number of higher-priority jobs of $\tau_h$ that can fully present in $BX^{g}_{i,j}$. 
Let us use $s^g_{i,j}$ and $c^g_{i,j}$ to denote the absolute start and completion of $\tau_i$'s $j$-th pure GPU segment.
Assuming $m$ arrivals of the higher-priority task $\tau_h$ during $BX^g_{i,j}$ with the start time of the $m$-th job $s_m \le c^g_{i,j}$ and that of the first job $s_1 \ge s^g_{i,j}$, we can deduce that $c_{m-1} - s_1 \le BX^g_{i,j}$ where $c_{m-1}$ is the completion time of $(m-1)$-th job of $\tau_h$, i.e., $c_{m-1}\le s_m$. This indicates that there are at least $m-1$ jobs fully executed within $BX^g_{i,j}$.
%according to Def.~\ref{def:full_overlap}. 
%This happens when $c_m > c^g_{i,j}$.
We can compute $m$ using $\lfloor \frac{BX^g_{i,j}}{T_h} \rfloor$, and obtain the minimum number of fully overlapped jobs of $\tau_h$, $m-1$, by $\lfloor \frac{BX^g_{i,j}}{T_h} \rfloor - 1$. Thus, Eq.~\eqref{eq:overlap_cg_segment} gives the minimum fully overlapped CPU execution.
\end{proof}
The overall minimum fully overlapped CPU execution of a higher-priority task $\tau_h$ for all pure GPU segments of task $\tau_i$ can be obtained by:
\begin{equation}
\small
\begin{aligned}
    O^{cg}_{i,h} = \sum_{0<j\le\eta^g_i} O^{cg}_{(i,j),h}
\end{aligned}
\label{eq:overlap_cg}
\end{equation}

Eq.~\eqref{eq:overlap_cg} computes the case for $\bigcup C_h \sqsubset G^e_{i,j}$. 
Similarly, for the overlapped execution case for $\bigcup G^e_h \sqsubset C_{i,j}$, we have:
\begin{equation}
\small
\begin{aligned}
    O^{gc}_{i,h} = \sum_{0<j\le\eta^c_i} O^{gc}_{(i,j),h}
\end{aligned}
\label{eq:overlap_gc}
\end{equation}
where
\begin{equation}
\small
\begin{aligned}
    O^{gc}_{(i,j),h} = \max( (\lceil \frac{BX^{c}_{i,j}}{T_h} \rceil - 1) \cdot \widecheck{G^e_h}, 0)
\end{aligned}
\label{eq:overlap_mg_segment}
\end{equation}

Based on the above, we now derive improved analyses. 
% Next, we will explore how the calculation of overlapped execution can improve our baseline schedulability analysis. By computing the overlapped execution, we can identify intervals during which certain types of preemption should not occur, and the use of lower guarantees a reliable analysis. This approach enables us to consider a more realistic schedule, as opposed to the pessimistic schedule in which preemptions are assumed to be possible at all times.
% For example, during the interval of \ding{192} in Fig.~\ref{fig:overlap}, $\tau_1$'s CPU execution overlaps with $\tau_2$'s GPU execution. This interval is computed twice in the baseline analysis: once for $\tau_2$'s GPU execution itself, and a duplicated one for $\tau_1$'s CPU preemption. By subtracting the amount of time for overlapped execution, we can obtain a tighter bound for the worst-case response time as shown in the next Theorem.

\begin{lemma}
\label{lm:improved_rt_busy}
Under the IOCTL-based approach with busy-waiting, the worst-case response time of $\tau_i$ is bounded by: 
% \begin{equation}
% \begin{aligned}
%     R_i &= C_i + G_i + B_i  \\    
%         &+ \sum_{\tau_h \in hpp(\tau_i)} (\lceil \frac{R_i}{T_h} \rceil (C_h + G_h) - (O^{cg}_{i,h} + O^{gc}_{i,h}))\\
%         &+ \sum_{\substack{\eta^g_h > 0 \land \tau_h \in hp(\tau_i) \\\land \tau_h \notin hpp(\tau_i)}} (\lceil \frac{R_i + J^g_h}{T_h} \rceil G^e_h - O^{gc}_{i,h})
% \end{aligned}
% \label{eq:improved_ioctl_rt_busy}
% \end{equation}
\begin{equation}
\small
\begin{aligned}
    &R_i = C_i + G^*_i + (\eta^g_i + 1)\cdot \epsilon \\[-2pt]
        &+ \sum_{\substack{\tau_h \in hpp(\tau_i) \land \eta^g_h=0}} (\lceil \frac{R_i}{T_h} \rceil \cdot C_h - O^{cg}_{i,h})
 \\[-5pt]
        &+ \sum_{\substack{\tau_h \in hpp(\tau_i) \land \eta^g_h>0}} (\lceil \frac{R_i}{T_h} \rceil \cdot (C_h + G^*_h) - (O^{cg}_{i,h} + O^{gc}_{i,h})) \\[-5pt]
        &+ \sum_{\substack{\tau_h\in hp(\tau_i) \land \\ \tau_h\notin hpp(\tau_i)\land\eta^g_h>0}} (\lceil \frac{R_i+J^g_h}{T_h} \rceil \cdot G_h^{e*} - O^{gc}_{i,h})
\end{aligned}
\label{eq:improved_ioctl_rt_busy}
\end{equation}
\end{lemma}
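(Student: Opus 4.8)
The plan is to start from the baseline recurrence of Lemma~\ref{lm:baseline_ioctl_rt_busy} and show that, for every higher-priority task $\tau_h$ counted in it, a provable portion of the charged interference is in fact absorbed by execution overlap with $\tau_i$'s own segments and therefore does not push back $\tau_i$'s completion. First I would fix the worst-case busy window of the job of $\tau_i$ under analysis and split $\tau_i$'s own work into its CPU segments and its pure GPU segments $G^e_{i,j}$. The driving observation is twofold: (a) while $\tau_i$ is inside a pure GPU segment $G^e_{i,j}$, its CPU core is idle from $\tau_i$'s perspective, so any CPU execution of a same-core higher-priority $\tau_h$ that happens during $G^e_{i,j}$ runs ``for free'' and does not delay $\tau_i$; and (b) while $\tau_i$ is inside a CPU segment $C_{i,j}$, any pure GPU execution of $\tau_h$ (same or different core) that happens during $C_{i,j}$ is overlapped and, again, does not delay $\tau_i$. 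These are the two scenarios already anticipated in Fig.~\ref{fig:overlap1} and formalized as the cases $C_h \sqsubset G^e_{i,j}$ and $G^e_h \sqsubset C_{i,j}$.

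Next I would invoke Lemma~\ref{lemma:overlap_gc} to quantify the guaranteed overlap. It gives $O^{cg}_{i,h}$ (via Eq.~\eqref{eq:overlap_cg}) as a lower bound on the CPU execution of $\tau_h$ that must fall entirely inside $\tau_i$'s pure GPU segments, built from the best-case segment completion time $BX^g_{i,j}$ of Eq.~\eqref{eq:overlap_g_bx} (adopted from~\cite{Bril2004}) and the best-case CPU execution $\widecheck{C_h}$. The crucial point I would stress is that this bound holds in \emph{every} schedule, including the one that maximizes $R_i$: $BX^g_{i,j}$ lower-bounds the real-time duration of $\tau_i$'s $j$-th pure GPU segment regardless of the arrival pattern of other tasks, so the pigeonhole count $\lfloor BX^g_{i,j}/T_h \rfloor - 1$ of \emph{full} $\tau_h$ jobs nested inside it is unavoidable, and during those windows $\tau_h$'s higher priority forces its CPU work onto the otherwise-idle core, so the overlap cannot be ``moved out'' by an adversary. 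Symmetrically, Eqs.~\eqref{eq:overlap_gc} and~\eqref{eq:overlap_mg_segment} give $O^{gc}_{i,h}$, a lower bound on $\tau_h$'s pure GPU execution forced to overlap $\tau_i$'s CPU segments. Since the two overlaps concern disjoint phenomena — $\tau_h$'s CPU work during $\tau_i$'s GPU time versus $\tau_h$'s GPU work during $\tau_i$'s CPU time — they may be subtracted simultaneously with no double counting.

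Then I would go term by term against Eq.~\eqref{eq:baseline_ioctl_rt_busy}. (i) For a same-core CPU-only $\tau_h$ ($\eta^g_h=0$), the baseline charges $\lceil R_i/T_h\rceil C_h$, of which at least $O^{cg}_{i,h}$ overlaps $\tau_i$'s pure GPU segments, giving $\lceil R_i/T_h\rceil C_h - O^{cg}_{i,h}$. (ii) For a same-core GPU-using $\tau_h$, the baseline charges $\lceil R_i/T_h\rceil (C_h + G^*_h)$, and both overlap types apply, so we subtract $O^{cg}_{i,h} + O^{gc}_{i,h}$; note subtracting $O^{gc}_{i,h}$ built from $\widecheck{G^e_h}$ (without the $2\epsilon$ update cost) against the $G^*_h$ term is conservative and therefore safe. (iii) For a different-core GPU-using $\tau_h$, only its pure GPU execution with update overhead, $G_h^{e*}$, preempts $\tau_i$ — exactly as in Lemma~\ref{lm:baseline_ioctl_rt_busy} — and at least $O^{gc}_{i,h}$ of it overlaps $\tau_i$'s CPU segments, yielding $\lceil (R_i+J^g_h)/T_h\rceil G_h^{e*} - O^{gc}_{i,h}$. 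The remaining quantities $C_i$, $G^*_i$, and the $(\eta^g_i+1)\epsilon$ execution-plus-blocking term are untouched, so summing reproduces Eq.~\eqref{eq:improved_ioctl_rt_busy}; the recurrence is monotone in $R_i$ and the subtracted overlaps are bounded, so the usual least-fixed-point iteration (starting from $C_i+G^*_i$) settles the response-time bound.

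The main obstacle is precisely the soundness of the subtraction: one must rule out an adversarial schedule that simultaneously inflates the $\lceil\cdot\rceil$ interference counts and destroys the overlaps. I would handle this by arguing, as above, that the overlap lower bounds depend only on best-case durations ($BX^g_{i,j}$, $\widecheck{C_h}$, $\widecheck{G^e_h}$) that are insensitive to the interference pattern, so they persist in the worst-case-for-$R_i$ run; the $\max(\cdot,0)$ clipping in Eqs.~\eqref{eq:overlap_cg_segment} and~\eqref{eq:overlap_mg_segment} guarantees the subtracted amount never exceeds what is charged. A secondary point to be careful about is scope: this lemma is for the busy-waiting mode, so the self-suspension-specific jitter subtleties (the last two terms of Lemma~\ref{lm:baseline_ioctl_rt_suspend}) must not be invoked — here $\tau_i$ stays on the CPU during its pure GPU segments, which is exactly what makes the ``CPU idle from $\tau_i$'s perspective'' claim in observation (a) precise.
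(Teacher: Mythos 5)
Your proposal is correct and follows essentially the same route as the paper, whose own proof is just the two-sentence observation that $O^{cg}_{i,h}$ and $O^{gc}_{i,h}$ are guaranteed minimum overlaps and can therefore be safely deducted from the CPU and GPU preemption terms of Lemma~\ref{lm:baseline_ioctl_rt_busy}. Your additional arguments — that the overlap bounds depend only on best-case quantities and so survive the worst-case schedule, and that the two subtractions concern disjoint phenomena — are sound elaborations of what the paper leaves implicit.
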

\begin{proof}
As $O^{cg}_{i,h}$ given by Eq.~\eqref{eq:overlap_cg} guarantees the minimum overlapped CPU execution of $\tau_h$ with $\tau_i$'s pure GPU execution, this portion can be safely deducted from the CPU preemption time of $\tau_h$. Similarly, $O^{gc}_{i,h}$ can be deducted from the GPU preemption time.
%This is a variant of Eq.~\ref{eq:improved_ioctl_rt_suspend} with a notable difference: $\tau_h$ can preempt $\tau_i$ for the entire execution as $\tau_i$ is busy waiting on the CPU during its GPU execution. In such cases, we also need to exclude the overlap time $O^{gc}_{i,h} = \{G_h, C_i\}$ since $\tau_i$ should not be preempted by $\tau_h$'s GPU execution during its own CPU computation time.
\end{proof}

\begin{lemma}
\label{lm:improved_rt_suspend}
Under the IOCTL-based approach with self-suspension, the worst-case response time of $\tau_i$ is bounded by: 
% \begin{equation}
% \begin{aligned}
%     R_i &= C_i + G_i + B_i  \\    
%         &+ \sum_{\tau_h \in hpp(\tau_i)} (\lceil \frac{R_i + J^c_h}{T_h} \rceil (C_h + G^m_h) - O^{cg}_{i,h})\\
%         &+ \sum_{\substack{\eta_i^g >0 \land \eta^g_h>0 \\\land \tau_h \in hp(\tau_i)}} (\lceil \frac{R_i + J^g_h}{T_h} \rceil G^e_h - O^{gc}_{i,h})
% \end{aligned}
% \label{eq:improved_ioctl_rt_suspend}
% \end{equation}
\begin{equation}
\small
\begin{aligned}
    &R_i = C_i + G^*_i + (\eta^g_i+1)\cdot\epsilon \\[-2pt]
        &+ \sum_{\substack{\tau_h \in hpp(\tau_i) \land \eta^g_h=0}} (\lceil \frac{R_i}{T_h} \rceil \cdot C_h - O^{cg}_{i,h}) \\[-3pt]
        &+ \sum_{\substack{\tau_h \in hpp(\tau_i) \land \eta^g_h>0}} (\lceil \frac{R_i + J^c_h}{T_h} \rceil \cdot (C_h + G_h^{m*}) - O^{cg}_{i,h}) \\[-3pt]
        % &+ \sum_{\substack{\eta^g_i>0\land \eta^g_h>0 \land\\ \tau_h\in hp(\tau_i)\land \tau_h\notin hpp(\tau_i)}} (\lceil \frac{R_i+J^g_h}{T_h} \rceil \cdot {G^e}^*_h - O^{gc}_{i,h}) \\[-3pt]
        % &+ \cyan{\sum_{\substack{\eta^g_i>0\land \eta^g_h>0 \land\\ \tau_h\in hpp(\tau_i)}} (\lceil \frac{R_i+J^g_h}{T_h} \rceil \cdot {G^e}_h - O^{gc}_{i,h}) }\\
        &+ \sum_{\mathclap{\substack{\tau_h\in hpp(\tau_i) \\\land \eta^g_h>0 \land \eta^g_i>0}}} \; (\lceil \frac{R_i+J^g_h}{T_h} \rceil \!\cdot\! G_h^e \!-\! O^{gc}_{i,h})
        + \sum_{\mathclap{\substack{\tau_h\in hp(\tau_i)\\\land\tau_h\notin hpp(\tau_i) \\\land \eta^g_h>0 \land \eta^g_i>0}}} \; (\lceil \frac{R_i+J^g_h}{T_h} \rceil \!\cdot\! G_h^{e*} \!-\! O^{gc}_{i,h})
\end{aligned}
\label{eq:improved_ioctl_rt_suspend}
\end{equation}
\end{lemma}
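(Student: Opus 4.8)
The plan is to obtain this bound as the self-suspension counterpart of Lemma~\ref{lm:improved_rt_busy}, by starting from the baseline self-suspension result in Lemma~\ref{lm:baseline_ioctl_rt_suspend} and then discounting the provably unavoidable overlaps captured by $O^{cg}_{i,h}$ (Eq.~\eqref{eq:overlap_cg}) and $O^{gc}_{i,h}$ (Eq.~\eqref{eq:overlap_gc}). Concretely, I would leave the first three rows of Eq.~\eqref{eq:baseline_ioctl_rt_suspend} untouched — the execution requirement $C_i + G^*_i$, the $(\eta^g_i+1)\epsilon$ self-trigger/blocking term, and the CPU-only same-core interference $\lceil R_i/T_h\rceil C_h$ — since those do not involve GPU work of $\tau_h$ that can be shown to run concurrently with $\tau_i$. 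Then I would treat the remaining sums one at a time: same-core GPU-using $\tau_h$ on the CPU side ($C_h+G_h^{m*}$ per job), same-core GPU-using $\tau_h$ on the GPU side ($G_h^e$ per job, kept without the $\ast$ for the reason recalled in Lemma~\ref{lm:baseline_ioctl_rt_suspend}), and remote GPU-using $\tau_h$ on the GPU side ($G_h^{e*}$ per job), subtracting from each the corresponding overlap quantity.

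For the same-core CPU-side sum, Lemma~\ref{lemma:overlap_gc} guarantees that $O^{cg}_{i,h}$ lower-bounds the amount of $\tau_h$'s CPU execution fully contained inside $\tau_i$'s pure GPU segments; because $\tau_i$ is self-suspended during those segments and the CPU is free, that CPU work of $\tau_h$ is not response-time delay for $\tau_i$ and can be deducted. Symmetrically, for the GPU-side sums, Eq.~\eqref{eq:overlap_gc} (via Eq.~\eqref{eq:overlap_mg_segment}) guarantees that $O^{gc}_{i,h}$ lower-bounds the pure GPU execution of $\tau_h$ fully contained inside $\tau_i$'s CPU segments, so that portion runs while $\tau_i$ progresses on the CPU and can likewise be removed. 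This reasoning applies to both the same-core and the remote GPU-side terms, since a remote $\tau_h$ can only interfere with $\tau_i$ through GPU contention. Substituting these deductions into Eq.~\eqref{eq:baseline_ioctl_rt_suspend} produces exactly Eq.~\eqref{eq:improved_ioctl_rt_suspend}; the jitter terms $J^c_h$, $J^g_h$ and the $\eta^g_i>0$ guards on the GPU-side sums are inherited verbatim from the baseline lemma and need no re-argument.

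The step I expect to be the main obstacle is ruling out double counting: for a single same-core GPU task $\tau_h$ we deduct both $O^{cg}_{i,h}$ from its CPU interference and $O^{gc}_{i,h}$ from its GPU interference, so I must argue these refer to disjoint portions of $\tau_h$'s timeline — which holds because $O^{cg}_{i,h}$ is accumulated from $\tau_h$'s CPU segments ($\widecheck{C_h}$) and $O^{gc}_{i,h}$ from $\tau_h$'s pure GPU segments ($\widecheck{G^e_h}$), and no single instant of $\tau_h$ belongs to both. A secondary point to verify carefully is that each deduction is smaller than the term it is subtracted from: the number of fully-overlapped jobs counted by $O^{cg}_{i,h}$ and $O^{gc}_{i,h}$ (built from the best-case completion times $BX^{g}_{i,j}$ and $BX^{c}_{i,j}$, which are independent of the unknown $R_i$ and hence stable across the fixed-point iteration) is at most the number of interfering jobs counted by the corresponding ceiling term, and the $\max(\cdot,0)$ guards in Eqs.~\eqref{eq:overlap_cg_segment} and \eqref{eq:overlap_mg_segment} ensure no term ever goes negative. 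Everything else reduces to the same bookkeeping already carried out for Lemma~\ref{lm:improved_rt_busy}.
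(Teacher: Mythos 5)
Your overall strategy is exactly the paper's: the paper's own proof is a single sentence stating that the result follows from Lemma~\ref{lm:baseline_ioctl_rt_suspend} combined with the overlap-deduction argument of Lemma~\ref{lm:improved_rt_busy}, and your term-by-term substitution of $O^{cg}_{i,h}$ and $O^{gc}_{i,h}$ into the baseline self-suspension bound is a faithful (and considerably more explicit) elaboration of that. Your two added checks --- that $O^{cg}_{i,h}$ and $O^{gc}_{i,h}$ are accumulated from disjoint portions of $\tau_h$'s timeline ($\widecheck{C_h}$ versus $\widecheck{G^e_h}$), so deducting both from the same task does not double count, and that the $\max(\cdot,0)$ guards together with the job-counting argument keep every summand nonnegative --- are not in the paper at all and strengthen the argument.

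There is, however, one concrete mismatch between what you derive and what the lemma states. You propose to leave the CPU-only same-core interference term $\lceil R_i/T_h\rceil\, C_h$ untouched on the grounds that such $\tau_h$ ``do not involve GPU work that can be shown to run concurrently with $\tau_i$,'' but Eq.~\eqref{eq:improved_ioctl_rt_suspend} (like Eq.~\eqref{eq:improved_ioctl_rt_busy}) subtracts $O^{cg}_{i,h}$ from that term as well. The deduction $O^{cg}_{i,h}$ does not concern $\tau_h$'s GPU work at all: by Lemma~\ref{lemma:overlap_gc} it lower-bounds the amount of $\tau_h$'s \emph{CPU} execution that is fully contained in $\tau_i$'s pure GPU segments, and this quantity is well defined and deductible whether or not $\tau_h$ ever touches the GPU --- under self-suspension $\tau_i$ has released the core during those segments, so that CPU work of $\tau_h$ costs $\tau_i$ nothing. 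As written, your derivation yields a valid but strictly more pessimistic bound than Eq.~\eqref{eq:improved_ioctl_rt_suspend} and therefore does not establish the lemma as stated; the fix is simply to apply the same $O^{cg}_{i,h}$ deduction to the $\eta^g_h=0$ sum, with the same justification you already give for the GPU-using same-core case.
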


\begin{proof}
% When computing the how much time is added to the response time of $\tau_i$ due to preemption by a higher-priority task $\tau_h$, potential overlaps in computation may lead to a certain duration being counted twice. This double counting occurs if an overlapped execution is assured based on the improved analysis above. The overlap includes the time due to $\tau_i$'s own computation and the preemption by $\tau_h$. Therefore, this portion should be directly deducted from $\tau_i$'s response time to avoid redundancy. 
% In terms of computing the preemption time caused by the CPU execution of $\tau_h$, the overlapped execution time of $O^{cg}_{i,h} = \{C_h, G_i\}$ should be excluded since $\tau_i$ is running GPU segments and cannot be preempted by any CPU execution of $\tau_h$.

% Similarly, the overlapped execution time of $O^{gc}_{i,h} = \{G_h, C_i\}$ should be excluded from the preemption time caused by $\tau_h$'s GPU execution.
The proof directly follows Lemma~\ref{lm:baseline_ioctl_rt_suspend} and~\ref{lm:improved_rt_busy}.
\end{proof}

\section{Evaluation}
We conduct schedulability experiments to compare the proposed approaches against prior work and assess the effect of the GPU priority assignment and the improved analysis. Then, we present a case study on two Nvidia embedded platforms.

\subsection{Schedulability Experiments}
\label{sec:schedulability_experiments}
We generated 1,000 random tasksets for each experimental setting based on the parameters in Table~\ref{tab:params_for_taskset_generation}. The parameter selection is inspired by the prior work~\cite{patel2018analytical}, with slight modifications to increase the system load. Based on the measurement in Sec.~\ref{sec:system_eval} of the manuscript, we aggressively set $\epsilon$ to 1~ms for our approaches, while assuming zero overhead for previous work.
For each task in a taskset, the number of tasks on each CPU is first chosen, and the utilization per CPU is generated based on the UUniFast algorithm~\cite{uunifast}. Then for each task, its period and the number of GPU segments are uniformly randomized within the given range. Then the parameters for each segment are determined. Task priority is assigned by the Rate Monotonic (RM) policy.%\red{The tasks are allocated to different CPUs based on WFD (worst-fit decreasing) heuristic.}

\begin{table}[b]
\centering
\begin{tabular}{l|l}
\hline
\textbf{Parameters}                                & \textbf{Value}       \\ \hline
Number of CPUs                                     & 4           \\
Number of tasks per CPU                            & [3, 6]      \\
Ratio of GPU-using tasks                           & [40, 60] \%   \\
Utilization per CPU                                  & [0.4, 0.6] \\
Task Period                                        & [30, 500] ms \\
Number of GPU segments per task                    & [1, 3]      \\
Ratio of GPU exec. to CPU exec. ($G_i/C_i$)        & [0.2, 2]    \\
Ratio of GPU misc. in GPU exec. ($G^m_i/G_i$)      & [0.1, 0.3]  \\
Runlist update cost ($\epsilon$)               & 1 ms  \\ \hline
\end{tabular}
\caption{Parameters for taskset generation}
\label{tab:params_for_taskset_generation}
\end{table}

\subsubsection{Comparison with Prior Work}
We first compare our proposed approaches with two well-known synchronization-based methods, MPCP~\cite{patel2018analytical} and FMLP+~\cite{BB2014-FMLP+}, both of which offer suspension-aware and busy-waiting analyses. For our approaches, we use the improved analysis given in Sec.~\ref{sec:analysis_improved} with the GPU priority assignment in Sec.~\ref{sec:priority_assignment} of the manuscript. Hence, we first run the response time test for a taskset with the default RM priorities, and if the test fails, try again with separate priorities for GPU segments.
%For the response time test with the GPU priority assignment, we initially run the test with the default RM priorities. If the test fails, we then alter the priority for tasks' GPU execution which affects the GPU preemption term in the response time test, to determine if adjusting the task priorities on GPU can make the taskset schedulable.

We investigate the impact of varying the number of tasks in the taskset, the number of CPUs, the utilization per CPU, and the ratio of GPU-using tasks in Figs.~\ref{fig:grp1:num_of_tasks},~\ref{fig:grp1:num_of_cpus}, ~\ref{fig:grp1:util_per_cpu} and~\ref{fig:grp1:ratio_of_gpu_tasks}, respectively. 
%We investigate the impact of varying the number of tasks in the taskset and the utilization per CPU in Figs.~\ref{fig:grp1:num_of_tasks} and~\ref{fig:grp1:util_per_cpu}. 
The results show that, in general, the \texttt{ioctl\_busy} and \texttt{ioctl\_suspend} approaches outperform previous methods. However, the \texttt{kthread\_busy} curve occasionally falls below those of previous methods. This is due to that \texttt{kthread\_busy} cannot efficiently utilize computing resources as the system becomes increasingly loaded on the CPU side.

\begin{figure}[]
  \vspace{-10pt}
    \centering
    \includegraphics[width=\linewidth]{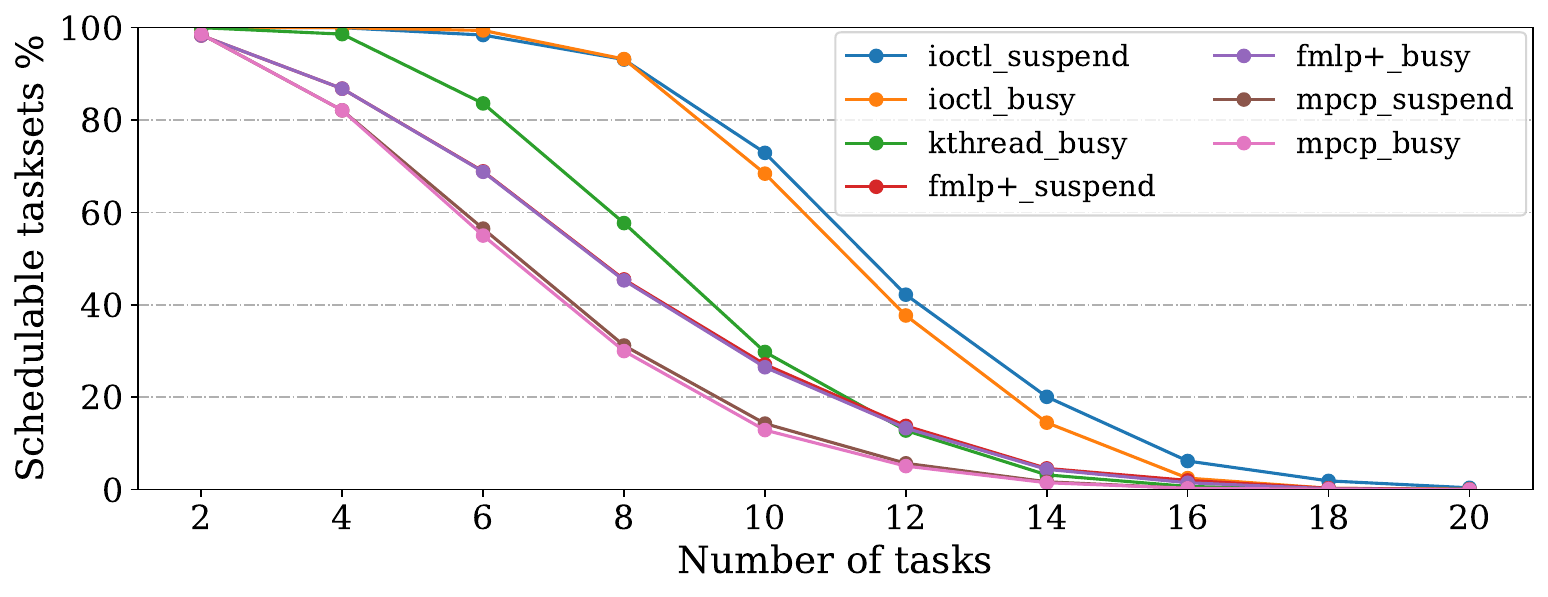}
    \caption{Schedulability w.r.t. the number of tasks}
    \label{fig:grp1:num_of_tasks}
\end{figure}

\begin{figure}[]
  \vspace{-10pt}
    \centering
    \includegraphics[width=\linewidth]{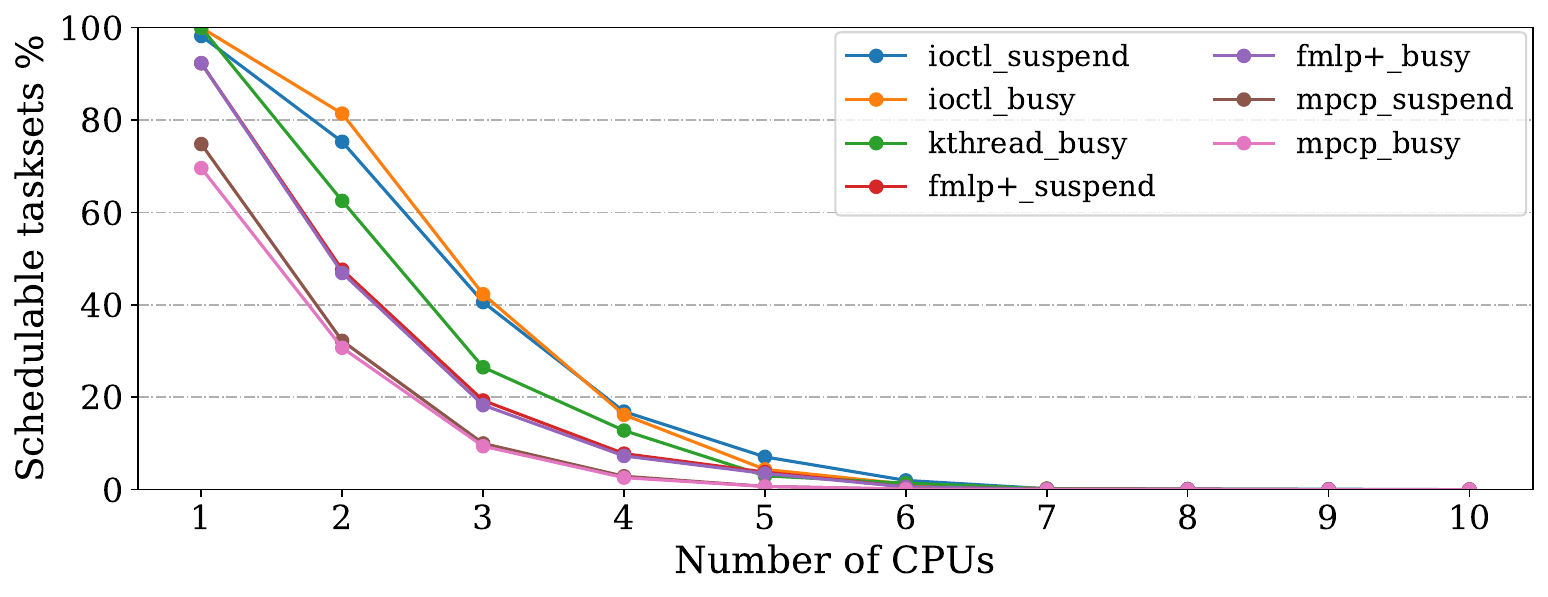}
    \caption{Schedulability w.r.t. the number of CPUs}
    \label{fig:grp1:num_of_cpus}
\end{figure}

\begin{figure}[]
  \vspace{-10pt}
    \centering
    \includegraphics[width=\linewidth]{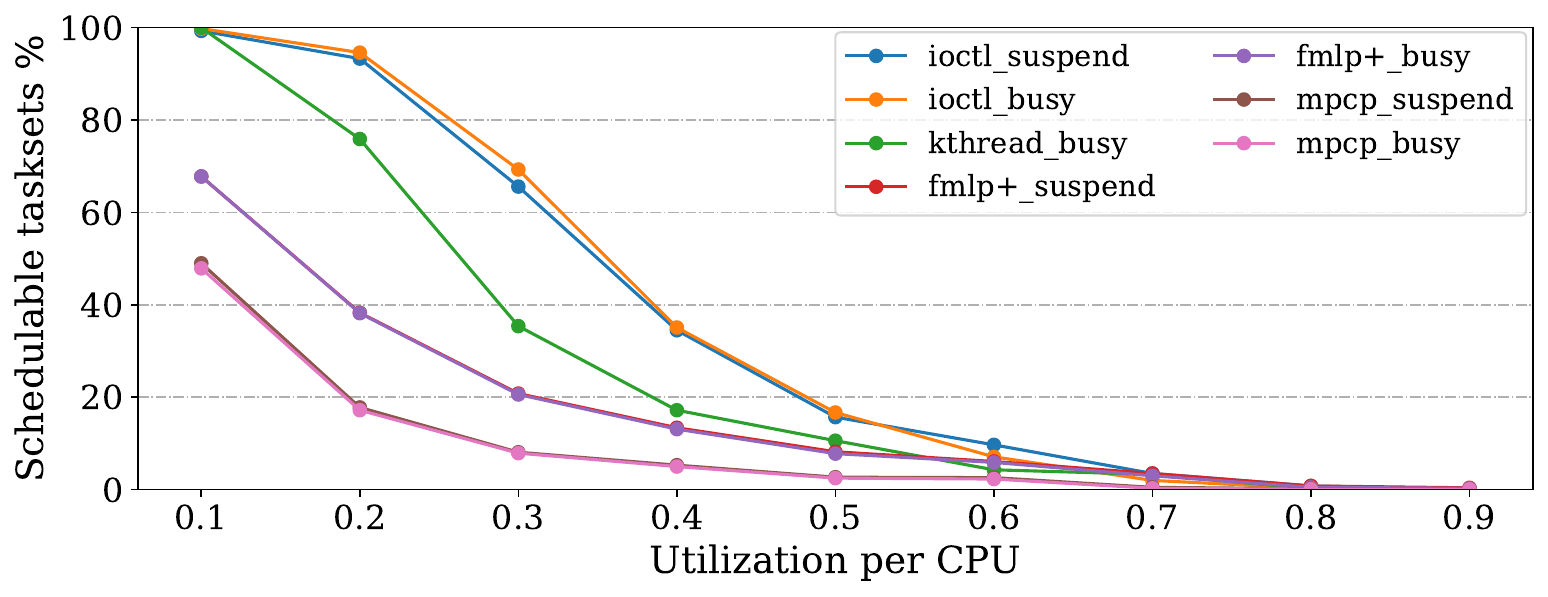}
    \caption{Schedulability w.r.t. the utilization per CPU}
    \label{fig:grp1:util_per_cpu}
\end{figure}

\begin{figure}[]
  \vspace{-10pt}
    \centering
    \includegraphics[width=\linewidth]{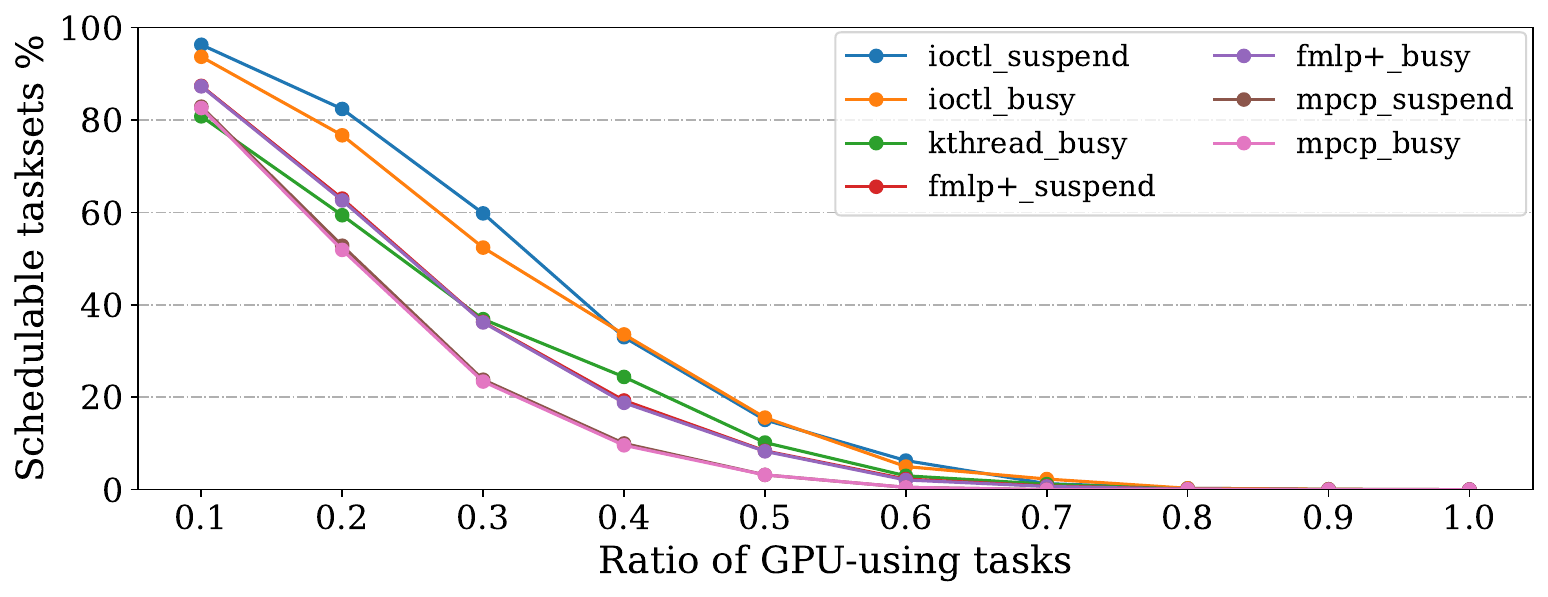}
    \caption{Schedulability w.r.t. the ratio of GPU-using tasks}
    \label{fig:grp1:ratio_of_gpu_tasks}
\end{figure}
\begin{figure}[t]
  \vspace{-10pt}
    \centering
    \includegraphics[width=\linewidth]{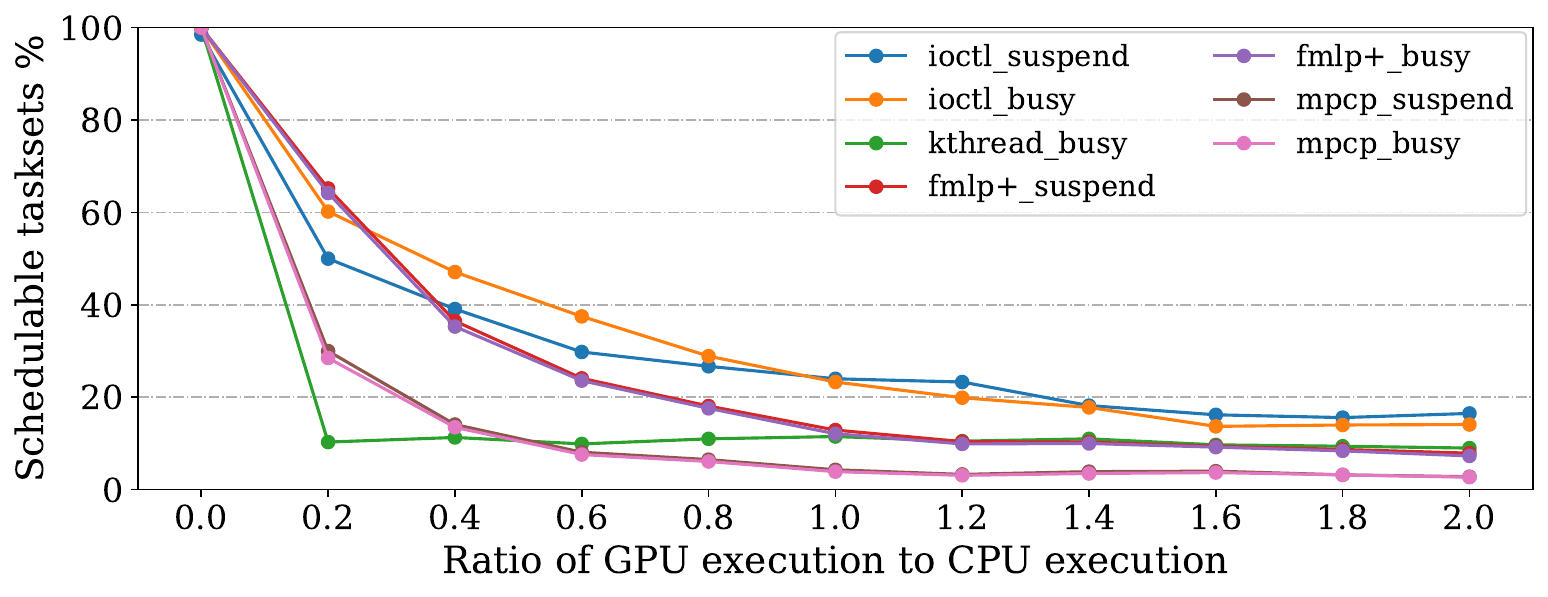}
    \caption{Schedulability w.r.t. the ratio of $G_i$ to $C_i$}
    \label{fig:grp1:ratio_of_g_to_c}
\end{figure}

\begin{figure}[t]
  \vspace{-10pt}
    \centering
    \includegraphics[width=\linewidth]{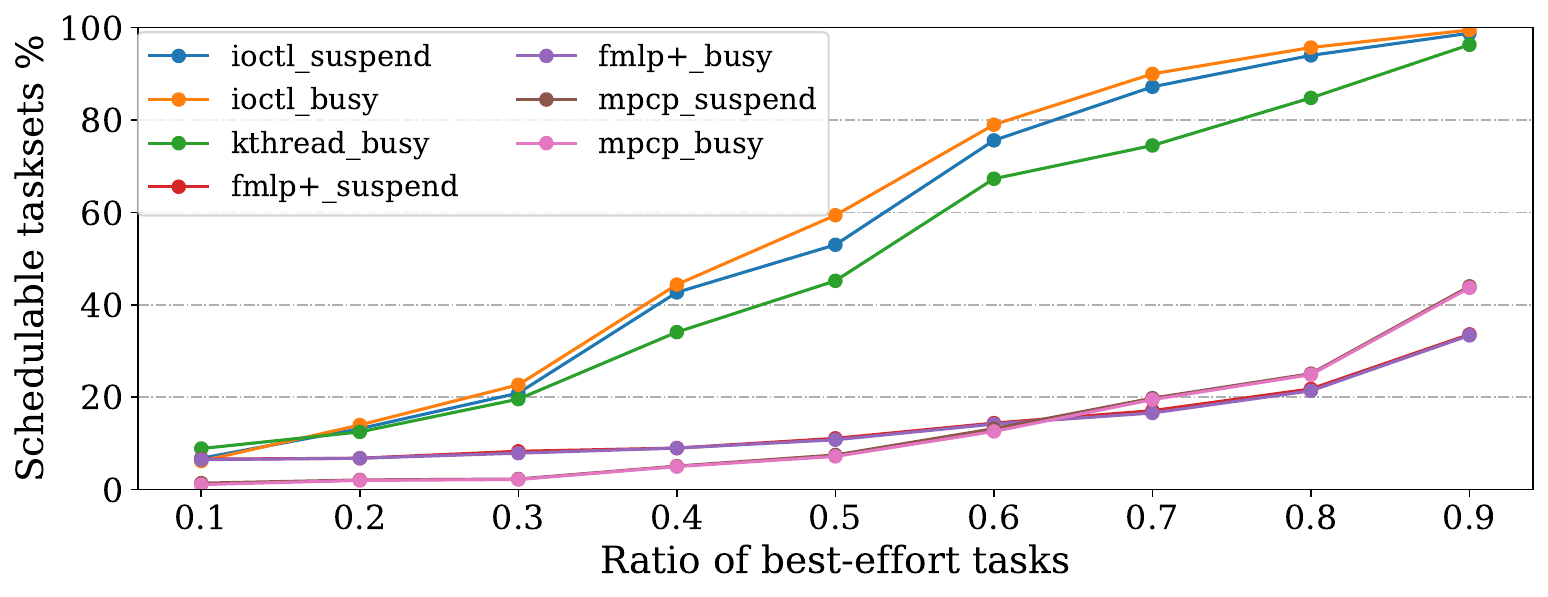}
    \caption{Schedulability w.r.t. the ratio of best-effort tasks}
    \label{fig:grp1:ratio_of_be_tasks}
\end{figure}
Fig.~\ref{fig:grp1:ratio_of_g_to_c} examines the effect of changing the ratio of $G_i/C_i$. When the ratio of $G_i/C_i$ is small, the proposed approaches underperform \texttt{fmlp+} because \texttt{fmlp+} can efficiently schedule tasks when GPU load is light. The advantages of our approaches are mitigated by the critical section of runlist updates, but this trend does not continue as the ratio increases.

Lastly, we explore the impact of best-effort tasks running with the lowest priority in the system.
After generating the tasks using the aforementioned method, we randomly designate a specific percentage of tasks as best-effort tasks in this experiment.
%\todo{How were best-effort tasks generated? Do they have longer periods?} 
Fig.~\ref{fig:grp1:ratio_of_be_tasks} depicts the percentage of schedulable tasksets as the ratio of best-effort tasks increases. The rest of tasks are all real-time tasks in each taskset with constraint deadlines. The best-effort tasks contribute to blocking time in the analysis of \texttt{mpcp} and \texttt{fmlp+}. Since GPU preemption is enabled in our proposed approaches, they significantly outperform the prior methods.

\subsubsection{Effect of GPU Priority Assignment}
In this experiment, we evaluate the impact of GPU priority assignment on taskset schedulability. We compare baseline analyses of \texttt{kthread\_busy}, \texttt{ioctl\_busy}, and \texttt{ioctl\_suspend} with and without separate GPU priorities, using the taskset generation parameters from Table~\ref{tab:params_for_taskset_generation}.  
Figure~\ref{fig:gain_gpu_prio} illustrates the advantages of GPU priority assignment. Busy-waiting approaches tend to benefit more from this assignment, as explained in Section~\ref{sec:analysis_gpu_prio} of the manuscript. 
Additionally, both busy-waiting and self-suspending approaches benefit from it since assigning GPU priorities independently of CPU priorities makes GPU resources allocation more efficient. E.g. Tasks with shorter GPU segments or higher GPU urgency can be prioritized appropriately, reducing resource wastage.
\begin{figure}[t]
  \vspace{-10pt}
    \centering
    \begin{subfigure}[b]{0.48\linewidth}
        \includegraphics[width=\linewidth]{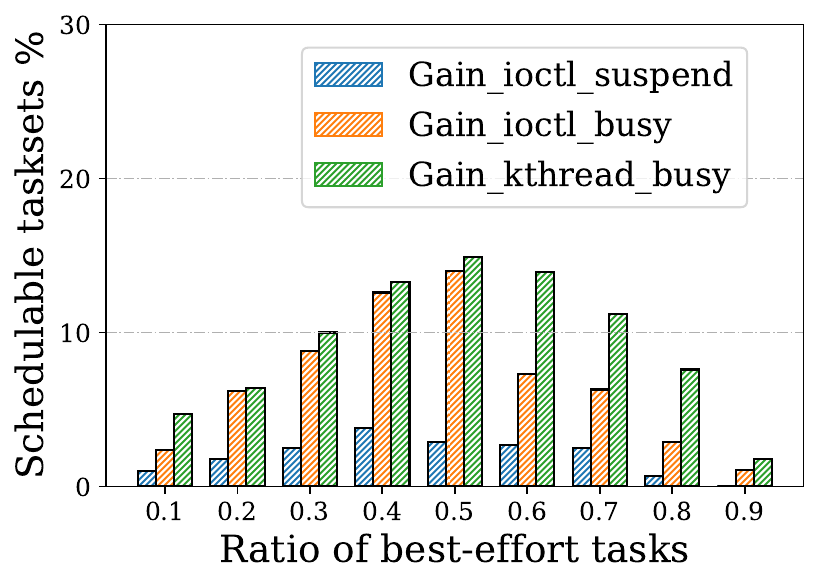}
    \end{subfigure}
    \begin{subfigure}[b]{0.48\linewidth}
        \includegraphics[width=\linewidth]{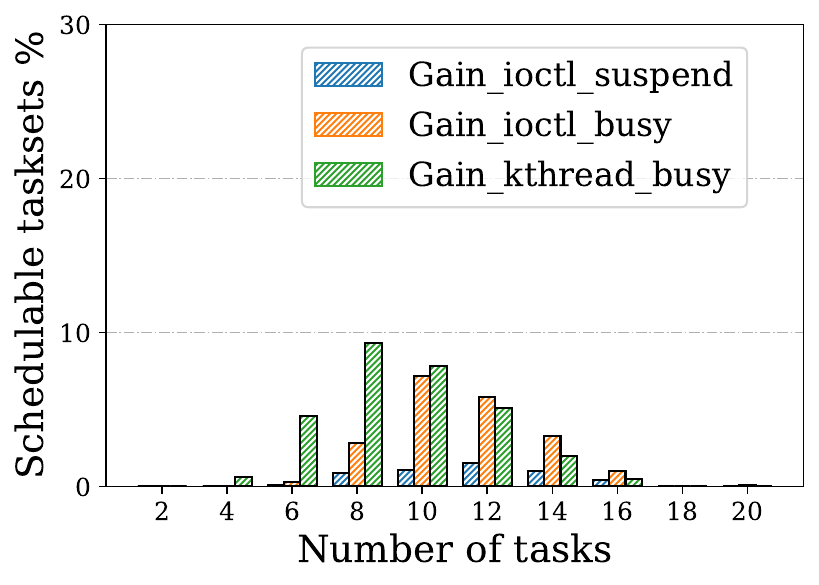}
    \end{subfigure}
    \begin{subfigure}[b]{0.48\linewidth}
        \includegraphics[width=\linewidth]{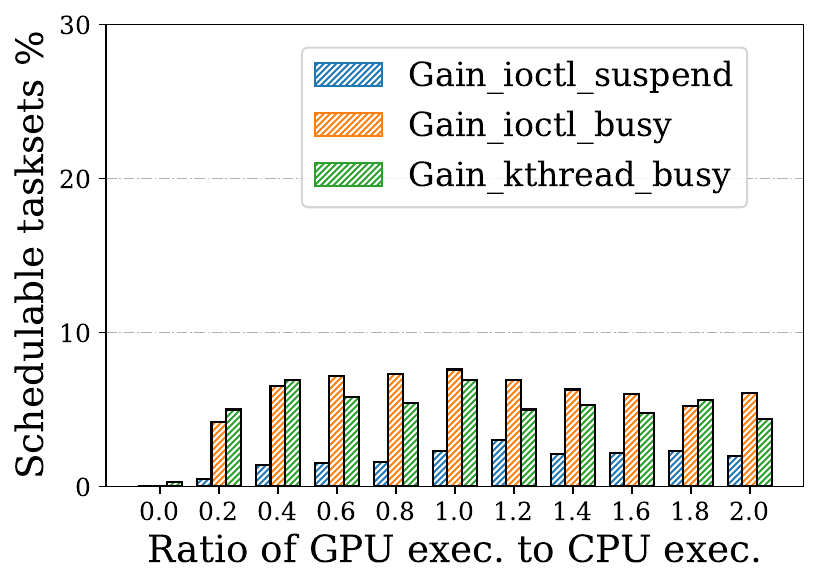}
    \end{subfigure}
    \begin{subfigure}[b]{0.48\linewidth}
        \includegraphics[width=\linewidth]{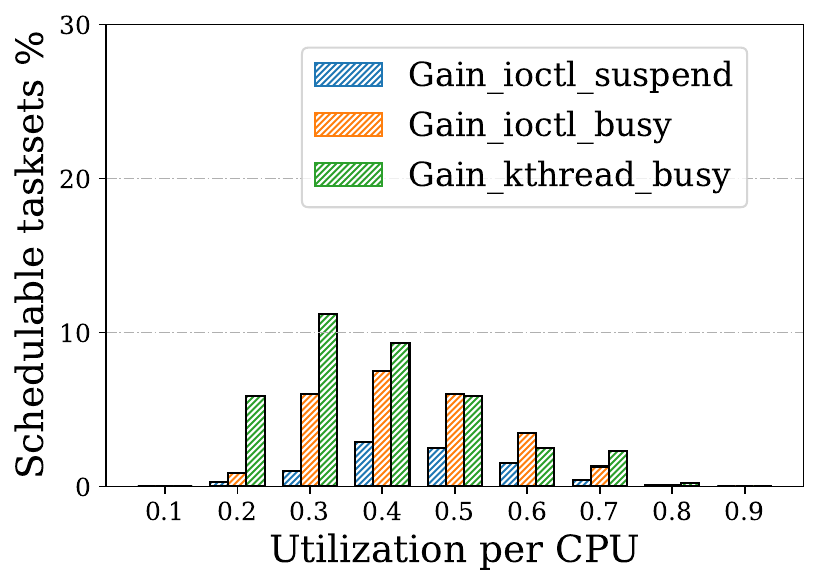}
    \end{subfigure}
    \begin{subfigure}[b]{0.48\linewidth}
        \includegraphics[width=\linewidth]{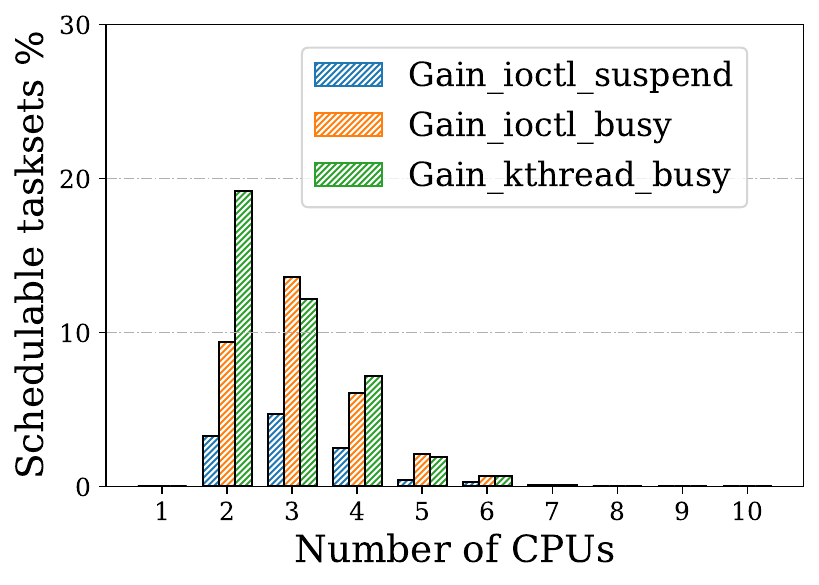}
    \end{subfigure}
    \begin{subfigure}[b]{0.48\linewidth}
        \includegraphics[width=\linewidth]{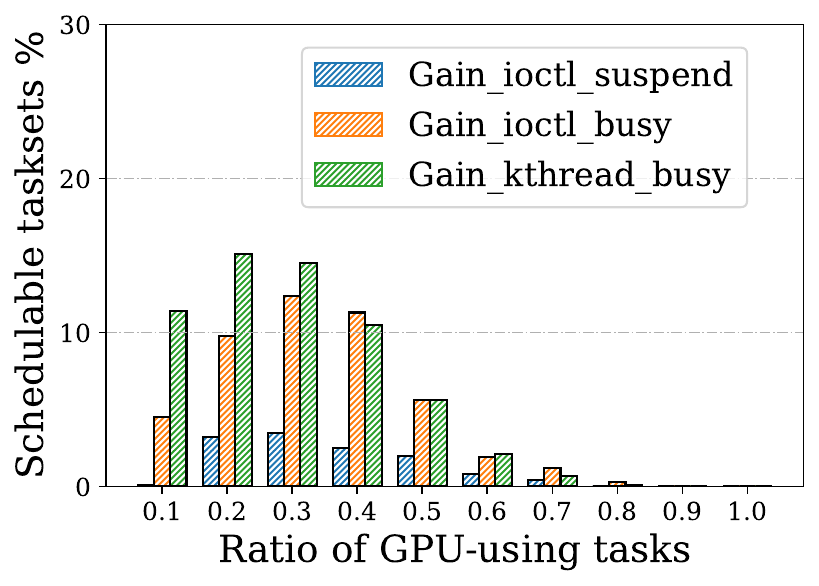}
    \end{subfigure}
\caption{Schedulability gain by GPU priority assignment}
\label{fig:gain_gpu_prio}
\end{figure}

\subsubsection{Effect of Improved Analysis}
Next, we examine the impact of the improved analysis on the overall schedulability of the taskset. 
%Given the improved analysis requires best-case execution times, the pessimism to use lower bounds for task execution and best-case computation time to reduce the interval of interest for preemptions, we set a fixed ratio of 1 between the lower and upper bounds in this experiment. 
%To enhance the likelihood of overlapped execution and amplify the effect, 
To better observe the effect, we use the following parameters to generate tasksets: (i) the number of CPUs is set to 2, (ii) two CPU tasks with high utilization and small periods and one GPU task with long GPU execution are always added to the taskset, and (iii) the number of tasks per CPU is [2, 4]. 
%However, this limits our ability to experiment with adjusting the number of CPUs and the utilization per CPU, so we skipped these two configurations here. 
All other parameters remain the same as in the previous experiments. Since the improvement does not apply to the kernel thread approach, we only compare the gain under the IOCTL-based approach, i.e., \texttt{ioctl\_busy} and \texttt{ioctl\_suspend}. The results are shown in Fig.~\ref{fig:gain_improved_analysis}.

\begin{figure}[]
    \vspace{-10pt}
    \centering
    \begin{subfigure}[b]{0.48\linewidth}
        \includegraphics[width=\linewidth]{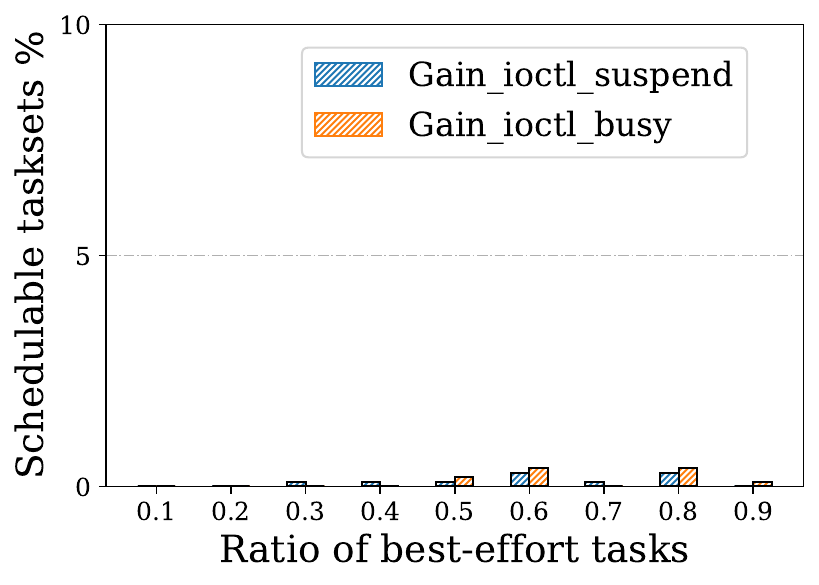}
    \end{subfigure}
    \begin{subfigure}[b]{0.48\linewidth}
        \includegraphics[width=\linewidth]{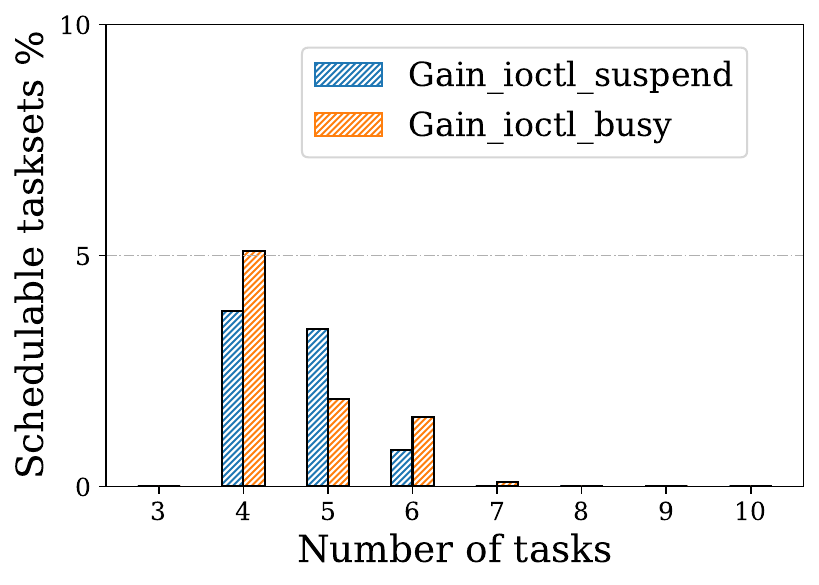}
    \end{subfigure}
    \begin{subfigure}[b]{0.48\linewidth}
        \includegraphics[width=\linewidth]{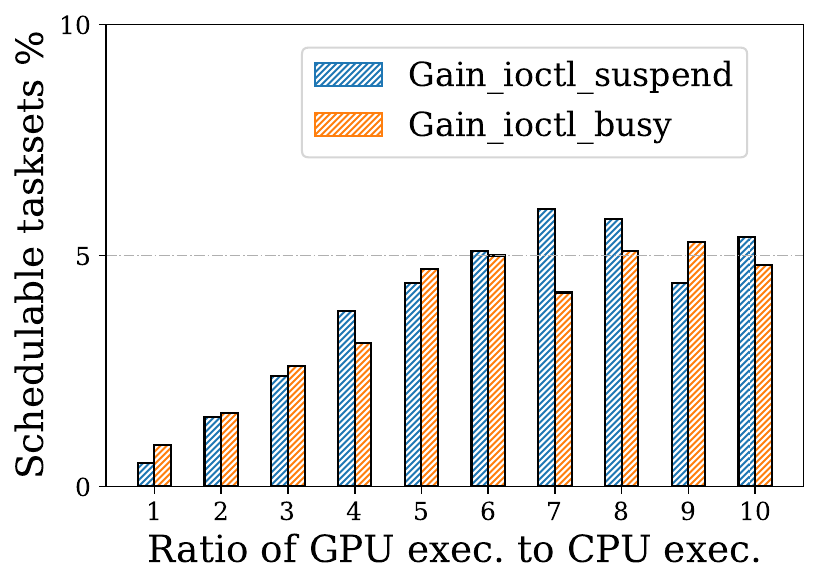}
    \end{subfigure}
    \begin{subfigure}[b]{0.48\linewidth}
        \includegraphics[width=\linewidth]{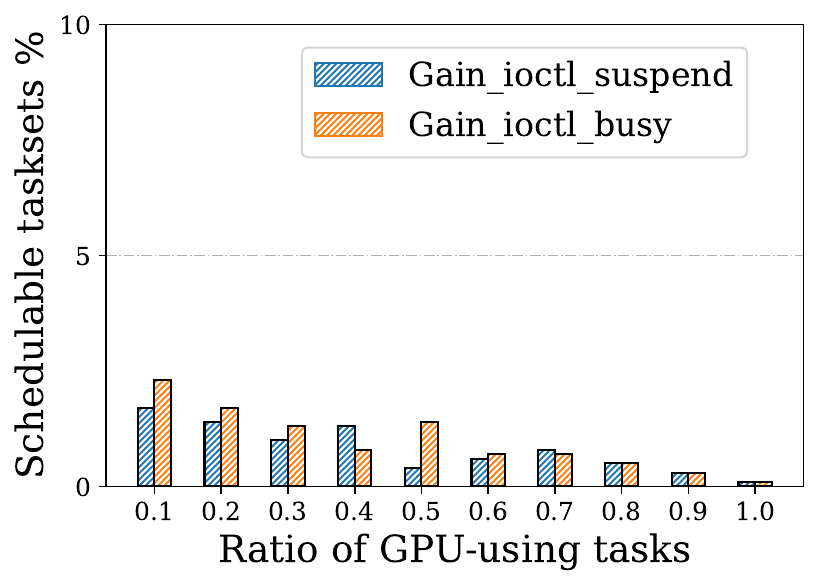}
    \end{subfigure}
\caption{Schedulability gain by improved analysis}
\label{fig:gain_improved_analysis}
\end{figure}

\subsection{System Evaluation}
\label{sec:system_eval}
We implemented our preemptive GPU scheduling approaches on two Nvidia platforms: the Nvidia Jetson Xavier NX Development Kit running L4T R35.1 with Jetpack 5.0.2 and the Nvidia Jetson Orin Nano Developer Kit running L4T R35.4.1 with Jetpack 5.1.2. The first platform features a 6-core 64-bit Carmel ARMv8.2 processor and a Volta architecture GPU. For our experiments, we configured it to run at its highest frequencies in the 6-core 15W mode. The second platform is equipped with a 6-core Arm Cortex-A78AE v8.2 64-bit CPU and an Ampere architecture GPU, and we operated it at its peak frequencies under its default power mode.

% \begin{table}[]
% \centering
% \scalebox{0.95} {
% \begin{tabular}{cc|ccccc}
% \hline
% Task    & Workload                 & $C_i$ & $G_i$ & $T_i=D_i$ & CPU & Priority \\ \hline
% 1       & \texttt{histogram}       & 5     & 14                        & 100                           & 1                       & 99                           \\
% 2       & \texttt{mmul\_cpu\_1}    & 116    & 0                         & 200                           & 1                       & 98                           \\
% 3       & \texttt{dxtc}            & 17     & 37                        & 250                           & 0                       & 97                           \\
% 4       & \texttt{mmul\_gpu\_2}    & 14    & 90                        & 400                           & 1                       & 96                           \\
% 5       & \texttt{mmul\_cpu\_2}    & 66    & 0                         & 500                           & 0                       & 95                           \\
% 6       & \texttt{mmul\_gpu\_1}    & 2     & 49                        & 250                           & 0                       & 0                            \\
% 7       & \texttt{mmul\_cpu\_3}    & 66    & 0                         & 800                           & 1                       & 0                            \\
% 8       & \texttt{simpleTexture3D} & 0     & $\inf$                    & -                             & 2                       & 0                            \\ \hline
% \end{tabular}
% }
% \caption{Taskset used in case study}
% \label{tab:taskset_case_study}
% \end{table}

\begin{table}[b]
\begin{tabular}{cc|ccccc}
\hline
Task & Workload        & $C_i$ & $G_i$ & $T_i=D_i$ & CPU & Priority \\ \hline
1    & histogram       & 1    & 10   & 100       & 1   & 70       \\
2    & mmul\_gpu\_1    & 2    & 12   & 150       & 2   & 69       \\
3    & mmul\_cpu    & 67   & 0    & 200       & 2   & 68       \\
4    & projection            & 12   & 15   & 300       & 1   & 67       \\
5    & dxtc            & 2    & 16   & 400       & 1   & 66       \\
6    & mmul\_gpu\_2    & 4    & 44   & 200       & 4   & 0        \\
7    & simpleTexture3D & 4    & 27   & 67        & 4,5   & 0        \\ \hline
\end{tabular}
\caption{Taskset used in case study}
\label{tab:taskset_case_study_nx}
\end{table}

\begin{figure}[t]
\vspace{-10pt}
    \centering
    \includegraphics[width=\linewidth]{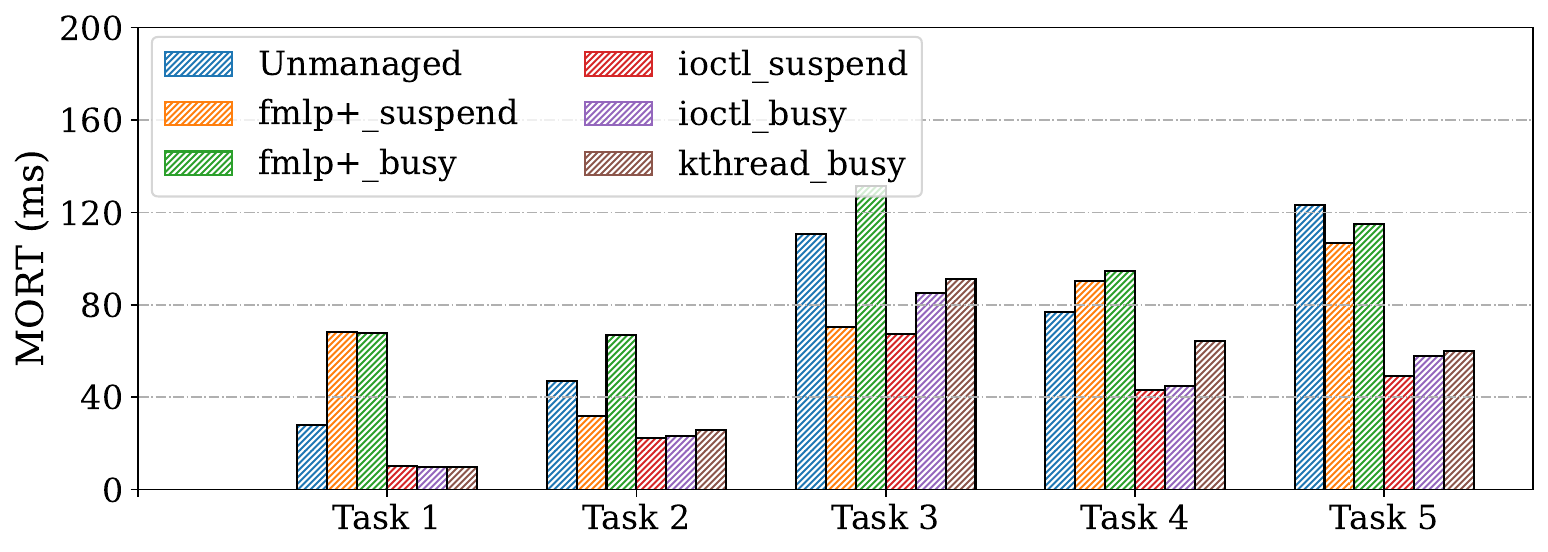}
\vspace{-0.5\baselineskip}     \caption{Maximum observed response time on Jetson Xavier}
    \label{fig:hw_expr:overall_mort_nx}
\end{figure}

\begin{figure}[t]
    \vspace{-10pt}
    \centering
    \begin{subfigure}[b]{0.32\linewidth}
        \includegraphics[width=\linewidth]{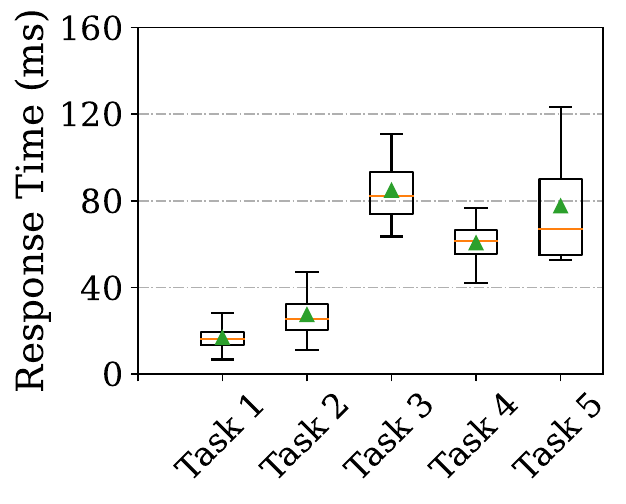}
\vspace{-1.5\baselineskip}\caption{Unmanaged}
    \end{subfigure}
    \begin{subfigure}[b]{0.32\linewidth}
        \includegraphics[width=\linewidth]{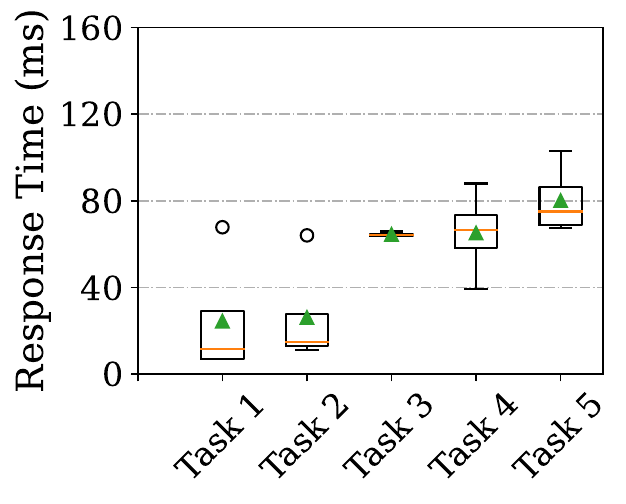}
\vspace{-1.5\baselineskip}\caption{fmlp+\_suspend}
    \end{subfigure}
    \begin{subfigure}[b]{0.32\linewidth}
        \includegraphics[width=\linewidth]{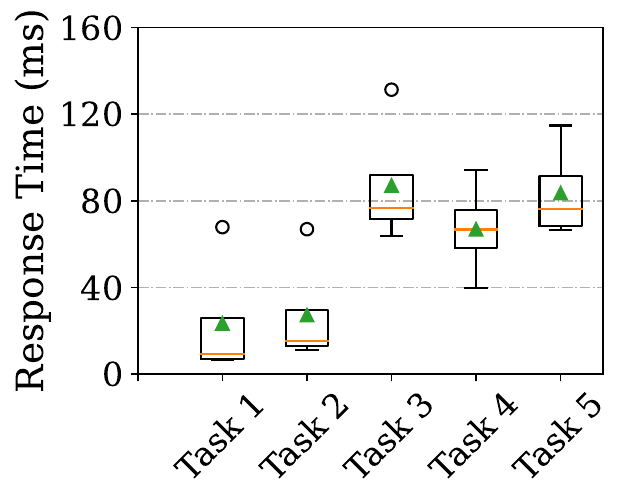}
\vspace{-1.5\baselineskip}\caption{fmlp+\_busy}
    \end{subfigure}
    \begin{subfigure}[b]{0.32\linewidth}
        \includegraphics[width=\linewidth]{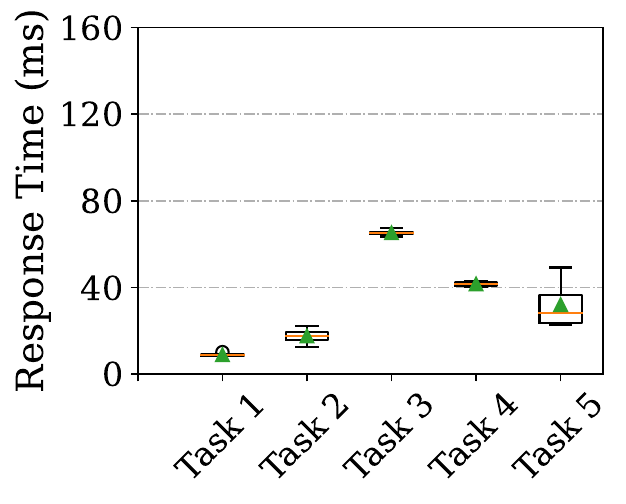}
\vspace{-1.5\baselineskip}\caption{ioctl\_suspend}
    \end{subfigure}
    \begin{subfigure}[b]{0.32\linewidth}
        \includegraphics[width=\linewidth]{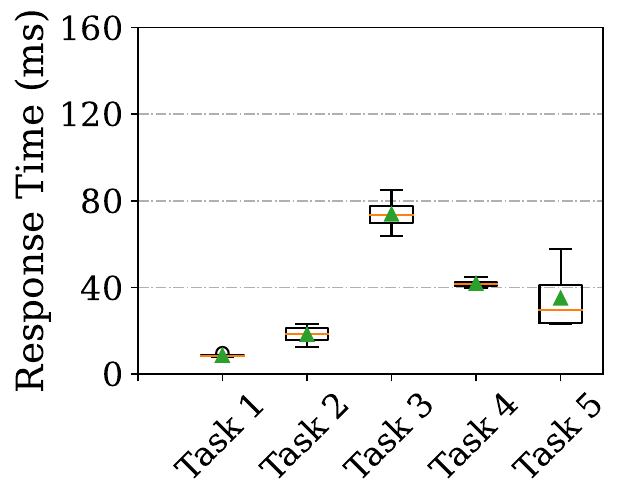}
\vspace{-1.5\baselineskip}\caption{ioctl\_busy}
    \end{subfigure}
    \begin{subfigure}[b]{0.32\linewidth}
        \includegraphics[width=\linewidth]{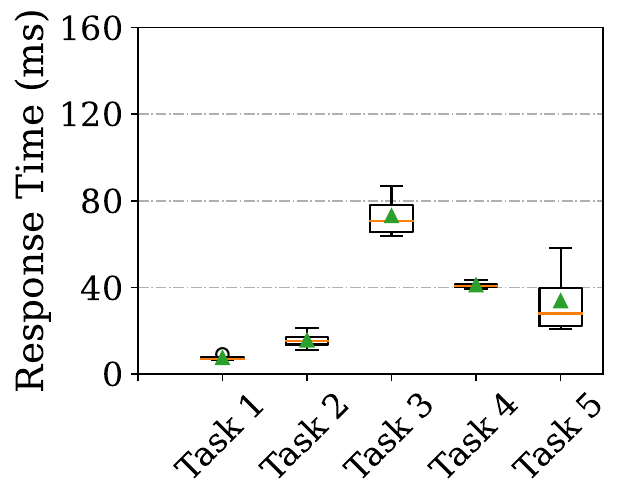}
\vspace{-1.5\baselineskip}\caption{kthread\_busy}
    \end{subfigure}
\caption{Observed response time variations on Jetson Xavier}
\label{fig:hw_expr:individual_mort_nx}
\end{figure}

\begin{figure}[t]
\vspace{-10pt}
    \centering
    \includegraphics[width=\linewidth]{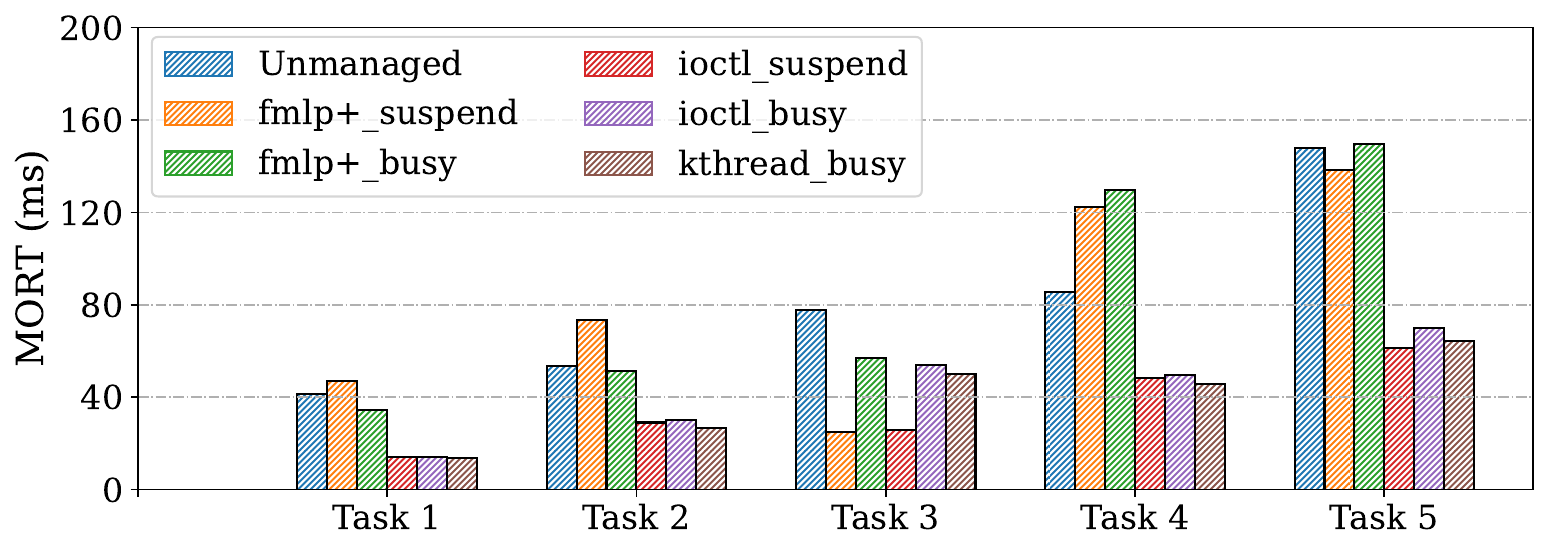}
\vspace{-0.5\baselineskip}     \caption{Maximum observed response time on Jetson Orin}
    \label{fig:hw_expr:overall_mort_orin}
\end{figure}

\begin{table}[t]
\centering
\begin{tabular}{c|cc|cc|cc}
\hline
\multirow{2}{*}{Task} & \multicolumn{2}{c|}{ioctl\_suspend} & \multicolumn{2}{c|}{ioctl\_busy} & \multicolumn{2}{c}{kthread\_busy} \\ \cline{2-7} 
                      & MORT              & WCRT            & MORT            & WCRT           & MORT             & WCRT           \\ \hline
1                     & 10.15             & 16              & 9.68            & 16             & 9.15             & 14             \\
2                     & 22.36             & 28              & 23.28           & 28             & 21.22            & 34             \\
3                     & 67.39             & 73              & 85.01           & 103            & 86.92            & 96             \\
4                     & 43.17             & 66              & 44.91           & 69             & 43.56            & 65             \\
5                     & 49.24             & 111             & 57.93           & 89             & 58.33            & 87             \\ \hline
\end{tabular}
\caption{Comparison of MORT and WCRT on Jetson Xavier}
\label{tab:mort_wcrt_nx}
\end{table}

\noindent\textbf{Case Study.} 
We conducted a case study on the aforementioned platforms to evaluate the performance and effectiveness of the proposed preemptive GPU scheduling mechanism. Table~\ref{tab:taskset_case_study_nx} provides a summary of the taskset employed in this study, and we show the tasks' WCET collected on Jetson Xavier NX. The tasks in the table are arranged in descending order of priority, and each task's GPU segments use the same OS-level priority as its CPU segments. Tasks 3 is a CPU-only task with $G_i = 0$, while the remaining tasks involve GPU computations. Tasks 6 and 7 are categorized as best-effort tasks, as they are not assigned real-time priority.
For \texttt{ioctl\_suspend}, we used CUDA events with the  \texttt{cudaEventBlockingSync} flag to suspend a task during its GPU execution. We compared our approaches against \texttt{unmanaged} (default Nvidia GPU driver) and \texttt{fmlp+} (synchronization-based approach).

We released the tasks at the same time and executed them for a duration of 30s during which we measured the maximum observed response time (MORT) for each real-time task. The results are depicted in Fig.~\ref{fig:hw_expr:overall_mort_nx}. Under unmanaged GPU scheduling with interleaved execution, the response times become unpredictable. The observed response time of each task is shown in Fig.~\ref{fig:hw_expr:individual_mort_nx}. The proposed approaches result in more consistent response times for real-time higher-priority tasks across all \texttt{ioctl\_suspend}, \texttt{ioctl\_busy}, and \texttt{kthread\_busy} scenarios when compared to \texttt{unmanaged} and \texttt{fmlp+}, as evidenced by more compact boxplots with fewer outliers.

%it cannot get the CPU time when other tasks are suspended under the busy-waiting mode during GPU accesses.

% Please add the following required packages to your document preamble:
% \usepackage{multirow}

Table~\ref{tab:mort_wcrt_nx} lists the comparison between MORT and WCRT computed with our proposed analysis, and the all the WCRT bounds the corresponding MORT.

We run the same experiments on Nvidia Jetson Orin Nano, an embedded GPU platform with the latest Ampere architecture, and the similar trends of MORT are shown in Fig.~\ref{fig:hw_expr:overall_mort_orin}.

\begin{table}[h!]
\centering
\begin{tabular}{l|llll}
\hline
                 & Max  & Min  & Avg  & Medium \\ \hline
Nvidia Jetson Xavier NX & 1021     & 38      & 421      & 508 \\
Nvidia Jetson Orin Nano & 1149     & 16       & 783     & 952 \\ \hline
\end{tabular}
\caption{Runtime overhead of runlist update (in $\mu$s)}
\label{tab:overhead}
\end{table}

\begin{figure}[h!]
    \vspace{-10pt}
    \centering
    \begin{subfigure}[b]{0.45\linewidth}
        \includegraphics[width=\linewidth]{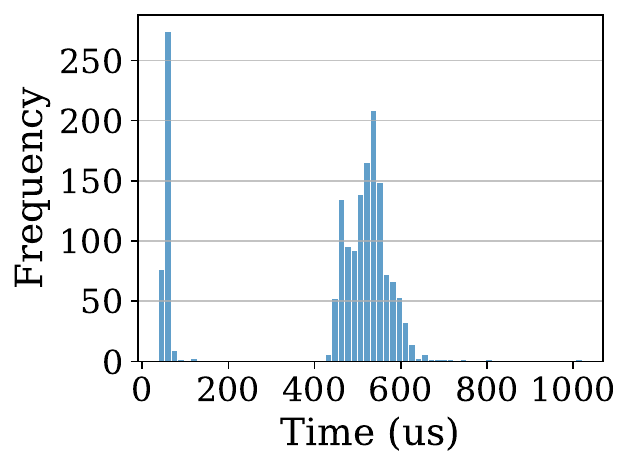}
    \vspace{-1.5\baselineskip}\caption{Jetson Xavier NX}
    \end{subfigure}
    \begin{subfigure}[b]{0.45\linewidth}
        \includegraphics[width=\linewidth]{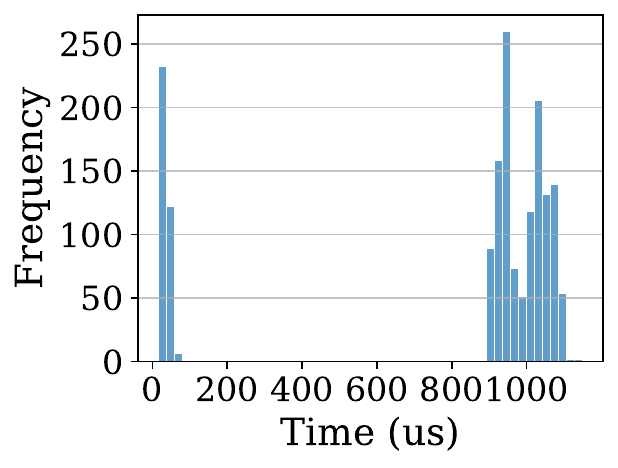}
        \vspace{-1.5\baselineskip}\caption{Jetson Orin Nano}
    \end{subfigure}
\caption{Histogram of runlist update overhead}
\label{fig:overhead_dist}
\end{figure}

\noindent\textbf{Overhead. } We also measured the overhead of runlist update, $\epsilon$, while running the taskset in the case study. The data is shown in Table~\ref{tab:overhead} and the distribution is shown in Fig.~\ref{fig:overhead_dist}. 
The measured overhead includes the cost of IOCTL system call, the cost of scheduling algorithms in kernel space, and the cost of runlist update. The lower mode in the distribution indicates requests that do not necessarily require runlist updates, and it mainly includes the cost of accessing the IOCTL system call. In our experimental settings, these two boards have similar CPU frequencies at about 1.5GHz while Jetson Xavier NX has a much higher GPU frequency of 1.1GHz than Jetson Orin Nano's 625MHz.
Both platforms exhibit a maximum overhead of about 1 ms, which is higher than the range reported in prior work~\cite{capodieci2018deadline}. We suspect this is due to the relatively lower frequency of our GPUs and it could be optimized in future generations of GPU architectures, as can be seen with Orin's case (10\% higher overhead despite half the frequency). Nonetheless, we consider the cost acceptable based on our schedulability experiments conducted with a similar overhead. 
%Since the maximum observed overhead is about 1 ms, we consider the cost acceptable and it could be optimized in the future.
\section{Conclusion}
In this paper, we have proposed two approaches: the IOCTL-based approach and the kernel thread approach to enable preemptive priority-based scheduling of GPU-using tasks in real-time systems. 
We first discussed how the Tegra GPU driver works and presented the design of our approaches. 
Then, we provided a comprehensive response time analysis, with an improvement to reduce pessimism by considering the overlaps between task segments using different computing resources. Through empirical evaluations, we have demonstrated the effectiveness of the proposed approaches in enhancing schedulability, and analyzed the breakdown of the improvements made by the reduced pessimism analysis and GPU priority assignment. Additionally, our case study shows the benefits of our approaches in predictability and responsiveness over the default GPU driver and prior work. The implementation of our approaches will be made publicly available as open source.

Future work can focus on further optimizing and refining the proposed approaches and exploring additional scheduling strategies such as dynamic priority. Combining our device-driver level approaches with GPU partitioning mechanisms will also be an interesting direction.

\small
\bibliographystyle{IEEEtran}
\bibliography{ref.bib}

\end{document}